\documentclass[11pt]{article}

\usepackage[margin=1in]{geometry}

\usepackage[T1]{fontenc}
\usepackage[utf8]{inputenc}
\usepackage{lmodern}

\usepackage{amsmath,amssymb,amsthm,mathtools}

\usepackage[numbers,sort&compress]{natbib}

\usepackage[dvipsnames]{xcolor}
\usepackage[colorlinks=true,linkcolor=blue,citecolor=blue,urlcolor=blue]{hyperref}

\numberwithin{equation}{section}

\theoremstyle{plain}
\newtheorem{theorem}{Theorem}
\newtheorem{lemma}[theorem]{Lemma}
\newtheorem{corollary}[theorem]{Corollary}

\theoremstyle{definition}

\newcommand{\disc}{\mathrm{disc}}
\newcommand{\herdisc}{\mathrm{herdisc}}
\newcommand{\wdisc}{\mathrm{wdisc}}
\newcommand{\detlb}{\mathrm{detlb}}
\newcommand{\la}{\langle}
\newcommand{\ra}{\rangle}

\definecolor{darkgreen}{RGB}{0,100,0}

\title{Approximation algorithms via discrepancy with hard constraints}
\author{
Mikhail Cherniavskii, Andrey Kupavskii, Stanislav Moiseev, Alexander Shekhovtsov
}
\date{}

\begin{document}

\maketitle

\begin{abstract}
In this paper, we present a discrepancy-based approach to cost-preserving rounding under hard constraints.
Starting from a feasible fractional solution, our rounding procedure preserves the designated hard constraints, never increases the linear objective, and bounds the error in the remaining constraints by the discrepancy of a suitable column-difference matrix. 

Our second discrepancy-theoretic result bounds the discrepancy of a matrix in terms of a novel column parameter, called {\em mountain distance}. This notion captures ordered structural
features of a matrix which are invisible to the classical Beck--Fiala and
Komlós column-norm parameters. 

We show a variety of applications of these two results. In particular, we obtain new results for min-cost unsplittable
flows, bounded-turn routing in grids and tori, rainbow prefix hypergraph
decomposition, and \(2\)-multicast scheduling.
\end{abstract}

\section{Introduction}
Discrepancy theory studies how to round or color a set of objects while keeping
 constraints nearly balanced. For a matrix \(A\in\mathbb R^{m\times n}\), its
discrepancy is
\[
    \disc(A)
    :=
    \min_{\chi\in\{-1,1\}^n}
    \|A\chi\|_\infty .
\]
For matrices that arise from set systems/hypergraphs, 
\(\disc(A)\) measures the imbalance for red-blue colorings. In combinatorial optimization, the relevance of discrepancy comes from its
connection to rounding linear programs. Lov{\'a}sz, Spencer and
Vesztergombi~\cite{lovasz1986discrepancy} showed that discrepancy parameters
control the additive error needed to round fractional solutions of linear
programs. Thus, discrepancy bounds can be viewed as rounding guarantees.

Discrepancy theory has a long history, including early work on hypergraph
decompositions and balancing
\cite{baranyai1974factrization,olson1978balancing,beck1981roth,beck1981integer,spencer1985six,gluskin1989extremal,banaszczyk1998balancing,
spencer2001discrepancy,banaszczyk2012series}. For a long time, many of these
results were mainly existential and therefore less directly usable in discrete
optimization. This changed with algorithmic discrepancy breakthroughs
\cite{bansal2010constructive,lovett2015constructive,
rothvoss2017constructive,bansal2019algorithm,bansal2026decoupling,guo2026vector}, which made
discrepancy a powerful tool for rounding linear relaxations.

Recent developments include an \(O((\log n)^{1/4})\) bound
for the Koml{\'o}s problem~\cite{ercan2026bound},
a proof of the Koml{\'o}s conjecture via directional total
variation~\cite{guo2026vector}, and an elementary
proof~\cite{karingula2026elementary}.
Guo, Fang, and Lu~\cite{guo2026vector} also gave a deterministic
polynomial-time algorithm for constant-discrepancy signing.

However, the standard discrepancy setup does not fully incorporate two features
that are central in approximation algorithms: objective functions and hard
feasibility constraints. In many rounding problems, some constraints may
tolerate small additive error, while assignment, partition, or conservation
constraints must be preserved exactly.
Related hard-constraint rounding questions appear in Beck--Fiala type results
\cite{biedl2002balanced,doerr2005roundings}, in entropy-method,
partial-coloring, and approximation-friendly discrepancy rounding
\cite{rothvoss2012entropy,bansal2014better,lovett2015constructive,bansal2017approximation,
bansal2019generalization}, in dependent-rounding
methods for bipartite and assignment-type problems
\cite{fiala1983algorithm,grigoriev2005unrelated,gandhi2006dependent,
kumar2009unified,rohwedder2025cost}, and in bin packing, where Hall-type
matching arguments preserve hard assignment constraints during rounding
\cite{eisenbrand2013bin,hoberg2017logarithmic}.

In this paper we develop a discrepancy-based rounding framework that
incorporates both hard constraints and a linear objective. We consider relaxations
of the form
\[
    \min\{\langle c,x\rangle : Ax=b_A,\ Hx=b_H,\ x\in[0,1]^n\}.
\]

Below, we state an informal version of our main rounding theorem.
Here, \(\mathcal Q(A)\) denotes a family of matrices whose
discrepancy can be bounded in terms of the hereditary discrepancy
of the original matrix \(A\). We assume that   \(H\in\{0,1\}^{k\times n}\) and
\(\|H_{\ast j}\|_1=1\) for every \(j\in[n]\).

\begin{theorem}\label{th:ROUGHdiscHardConstraintsTheorem}
Let \(A\), $H$ be the matrices as above, and let $c$ be the linear objective. 
Let $\beta > 0$, such that
$
    \max_{Q\in\mathcal{Q}(A)} \disc(Q) \leq \beta.
$

Then for any \(x\in[0,1]^n\) with \(Hx\in\mathbb{Z}^k\), one can efficiently find
\(y\in\{0,1\}^n\) such that
\[
    Hy=Hx,
    \qquad
    \langle c,y-x\rangle\le0,
\qquad 
    \|A(x-y)\|_\infty
    \le
    \beta+\frac{2}{n}\|A\|_\infty .
\]
\end{theorem}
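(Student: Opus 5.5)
Throughout I take \(\mathcal Q(A)\) to be the family of matrices whose columns are differences \(A_{\ast j}-A_{\ast j'}\) of two columns of \(A\) lying in the same block of \(H\) (the same row of \(H\) has a \(1\) in both), together with single columns \(A_{\ast j}\). Every such matrix then has discrepancy controlled by \(\herdisc(A)\), matching the abstract's "column-difference matrix".

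The plan is the standard bit-by-bit rounding of Lovász--Spencer--Vesztergombi, modified so that the signing is applied to paired variables. First I reduce to the case where \(x\) has finite binary expansion with \(L\) bits, with \(L\approx\log_2 n\) plus constants. I round \(x\) down or up to a multiple of \(2^{-L}\) so that each block sum \(Hx\) (an integer) is preserved and \(\langle c,x\rangle\) does not increase. This is a small rounding inside each block: move mass from the highest-cost coordinates to the lowest. It changes each coordinate by at most \(2^{-L}\), so the cost is an \(\ell_\infty\) error of about \(n\cdot 2^{-L}\|A\|_\infty\) in the \(A\)-constraints. Choosing \(2^{-L}\le 2/n^2\) gives an error of at most \(\frac{2}{n}\|A\|_\infty\). (Here \(\|A\|_\infty\) is read as the maximum entry; the bookkeeping needs care but is routine.)

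Next comes the iteration. At stage \(\ell=L,L-1,\dots,1\), let \(S\) be the set of coordinates whose \(\ell\)-th bit is \(1\). Within each block \(B\) of \(H\), \(\sum_{j\in B}x_j\) is a multiple of \(2^{-\ell+1}\) after the lower bits are cleared, so \(|S\cap B|\) is even. Sort \(S\cap B\) by cost \(c_j\) and pair consecutive elements \((j,j')\). For each pair, the update \(x\leftarrow x+2^{-\ell}\sigma(e_j-e_{j'})\) with \(\sigma=\pm1\) clears bit \(\ell\) of both coordinates and keeps \(Hx\) exactly. By hypothesis the matrix with columns \(A_{\ast j}-A_{\ast j'}\) lies in \(\mathcal Q(A)\), so some signing \(\sigma\) has \(\|A\Delta\|_\infty\le 2^{-\ell}\beta\). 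Summing over stages gives \(\beta\sum_\ell 2^{-\ell}\le\beta\). The final \(y\) is integral in \([0,1]\) because each move keeps every coordinate in \([0,1]\): a coordinate with bit \(\ell\) set changes by \(\pm2^{-\ell}\) to a value with no lower bits.

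\textbf{Main obstacle: the objective.} A discrepancy signing ignores \(c\), so the stated requirement \(\langle c,y-x\rangle\le 0\) is the delicate point. I would use the symmetry of discrepancy: if \(\sigma\) is good, so is \(-\sigma\). The two opposite updates change the cost by \(\pm 2^{-\ell}\sum\sigma_p(c_j-c_{j'})\), so one of them does not increase the cost. This needs only a sign flip per stage, and the \(A\)-error bound is unchanged.

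The remaining technical point is parity across blocks. If some \(|S\cap B|\) were odd, one leftover coordinate would remain; my reduction step forbids this, since block sums are integers and all lower bits have been cleared. If needed, leftover singletons can be handled with the single-column members of \(\mathcal Q(A)\).
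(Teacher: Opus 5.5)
Your scheme is the paper's. You truncate to about $2\log_2 n$ binary digits with a cost-monotone repair inside each block. Then you clear bits from level $L$ down to $1$ by pairing, inside each block, the coordinates with bit $\ell$ set (the parity is forced by integral block sums). You sign the pairs with the oracle and flip the overall sign so the cost does not increase.

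The gap is in which matrices the hypothesis actually covers. In the paper, $\mathcal Q(A)$ is built only from matchings on $[n]$ of size at most $\lfloor\min\{n/2,m\}\rfloor$. At level $\ell$ your matching contains every coordinate with bit $\ell$ set, which can be as many as $n/2$ pairs. Once this exceeds $m$, your $Q$ is not in $\mathcal Q(A)$, and the hypothesis gives you no signing with $\|Q\sigma\|_\infty\le\beta$. Your own definition of $\mathcal Q(A)$ drops the cap. That makes the argument internally consistent, but it turns the hypothesis into a statement about matrices with up to $n/2$ columns. The cap is what the applications exploit: with the Jiang--Reis oracle, at most $m$ columns gives $\log(2+m)$ in Corollary~\ref{cor:discHardConstraints}, whereas up to $n/2$ columns would only give $\sqrt{\log(2+m)\log(2+n)}$. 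Restricting to same-block pairs, by contrast, is harmless, since only those are ever used.

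The missing step is a preprocessing that costs nothing. First replace $x$ by a $c$-minimizing vertex $x'$ of $\{z\in[0,1]^n: Az=Ax,\ Hz=Hx\}$, so that $Ax'=Ax$, $Hx'=Hx$ and $\langle c,x'\rangle\le\langle c,x\rangle$. Let $F$ be the fractional support of $x'$, let $I$ be the set of rows of $H$ meeting $F$, and put $k':=|I|$.
\begin{itemize}
\item At a vertex, the columns of $\binom{A}{H_I}$ indexed by $F$ are linearly independent, so $|F|\le m+k'$.
\item Every row in $I$ contains at least two fractional coordinates, because its sum is an integer. Rows of $H$ have disjoint supports, so $|F|\ge 2k'$.
\end{itemize}
Hence $|F|\le 2m$. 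Neither your grid rounding nor the bit-clearing ever turns an integral coordinate into a fractional one. So at every level the coordinates you pair lie in $F$, and the matching has at most $|F|/2\le m$ pairs. Thus $Q\in\mathcal Q(A)$, and the rest of your argument goes through verbatim.

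Smaller points:
\begin{itemize}
\item You need $L\approx 2\log_2 n$, as your condition $2^{-L}\le 2/n^2$ says, not $\log_2 n$ plus a constant.
\item The single-column members you add to $\mathcal Q(A)$ are unnecessary, since your parity argument already rules out leftovers, and they are not in the paper's family.
\item Sorting by cost before pairing plays no role.
\item ``Efficiently'' requires the oracle to return the signing, as in Theorem~\ref{th:discHardConstraints}.
\end{itemize}
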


The theorem reduces rounding to constructively bounding
\(\max_{Q\in\mathcal Q(A)}\disc(Q)\).
In many settings, the discrepancy estimates used for \(A\)
apply to this family with only a constant-factor loss.

The theorem also extends to block-diagonal hard-constraint
matrices whose blocks are totally unimodular and have bounded number of rows. For simplicity and clarity of exposition, we give
the statement and proof of this extension in
Appendix~\ref{app:block-TU-hard-constraints}.

We combine this framework with discrepancy bounds based on
two column parameters: 
the Koml{\'o}s \(\ell_2\) parameter, and the
\emph{mountain distance}
\[
    \mu(A):=
    \max_{j\in[n]}
    \sum_{i\in[m-1]}|A_{i+1,j}-A_{i,j}|.
\]
These yield new approximation guarantees for min-cost
unsplittable flow with bounded-hop paths, bounded-turn routing
in grids and tori, and rainbow prefix hypergraph decomposition.

Finally, we describe an application of the rainbow prefix hypergraph decomposition to a problem that arose in Telecom, to the so-called
 \(2\)-multicast scheduling problem. In fact, this problem was, to a large extent, responsible for the existence of the present paper.

\subsection{Notation}
Throughout the article we use the following notation. For a positive integer
\(n\), we write $[n]:=\{1,\ldots,n\}.$
All logarithms are base \(2\); in particular, \(\log n:=\log_2 n\). The notation
\(\widetilde O(\cdot)\), somewhat unconventionally, suppresses factors that are polynomial in logarithms of
logarithms ($\mathrm{poly} \log \log$) of the relevant parameters. For a matrix
\(A\in\mathbb R^{m\times n}\), we denote its \(j\)-th column by \(A_{\ast j}\)
and define $\|A\|_p:=\max_{j\in[n]}\|A_{\ast j}\|_p .$
If \(\mathcal F\) is a hypergraph and \(v\) is a vertex, then
\(d_{\mathcal F}(v)\) denotes the degree of \(v\) in \(\mathcal F\). Finally,
\(\binom{[n]}{k}\) denotes the family of all \(k\)-element subsets of \([n]\),
and \(\binom{[n]}{\le k}\) denotes the family of all subsets of \([n]\) of size
at most \(k\). For a directed or undirected graph or multigraph \(G\), we denote by \(V(G)\)
and \(E(G)\) its vertex set and edge set, respectively. 

\newpage
\subsection{Our results}

We prove the following results.

\begin{itemize}
    \item \textbf{Rounding with hard constraints.}
We provide a discrepancy-based rounding
method for linear programs with exact feasibility requirements. For programs of
the form $\min\{\langle c,x\rangle : Ax=b_A,\ Hx=b_H,\ x\in[0,1]^n\},$
where \(H\in\{0,1\}^{k\times n}\) satisfies
\(\|H_{\ast j}\|_1=1\) for every \(j\in[n]\), the theorem reduces rounding to
bounding the discrepancy of matrices formed from pairwise differences of
columns of \(A\). The resulting rounding preserves the hard constraints
encoded by \(H\) exactly, does not increase the objective value, and incurs only
a controlled additive error in the constraints encoded by \(A\).

    \item \textbf{A mountain-distance discrepancy bound.}
    We introduce the column parameter
    \[
        \mu(A)
        :=
        \max_{j\in[n]}
        \sum_{i\in[m-1]} |A_{i+1,j}-A_{i,j}|,
    \]
    which measures the total variation of each column along the row order. We
    prove a tight discrepancy bound
    \[ \disc(A)
        \le
        O(1) \cdot \mu(A) \cdot \log(2+\min\{m,n\})
        +
        2\|A\|_\infty .
    \]
    The parameter \(\mu(A)\) is designed for rounding settings in which the
constraints have underlying structure, as is the case in many resource-allocation and
scheduling problems. We apply it to study
min-cost bounded-turn unsplittable flow in grids and tori.

    \item \textbf{Min-cost unsplittable flow in general directed multigraphs.}
    We apply our rounding framework together with the constructive Koml{\'o}s
    discrepancy bound to min-cost unsplittable flow. Given requests
    $\{(s_i,t_i,\rho_i)\}_{i\in[k]},$
    where request \(i\) must be routed unsplittably from \(s_i\) to \(t_i\), and
    where admissible paths have at most \(D\) hops, we round any feasible LP
    solution to an integral routing. The algorithm selects one path per request,
    does not increase the LP cost, and, on every edge, the integral congestion
    exceeds the LP congestion by at most
    $O\!\left(\rho_{\max}\sqrt D\right),$
    where \(\rho_{\max}:=\max_i\rho_i\).

    \item \textbf{Min-cost bounded-turn unsplittable flow in grids and tori.}
    For \(d\)-dimensional grids and tori, we consider the structured case in
    which every admissible path has at most \(r\) turns. Here we apply the same
    hard-constraint rounding framework together with our mountain-distance
    discrepancy bound. This gives a cost-preserving rounding in which the
    additive congestion error on every edge is at most $O\!\left(
            \rho_{\max}(r+1)\log |V(G)|
        \right).$

    \item \textbf{Rainbow prefix hypergraph decomposition.}
    We apply our rounding framework together with the constructive
Koml{\'o}s bound 
to construct decompositions of regular colored hypergraphs. Let $\mathcal F=\mathcal C_1\sqcup\cdots\sqcup\mathcal C_n
    $
    be an \(N\)-regular colored multihypergraph, where each color class
    \(\mathcal C_i\) contains \(N\) edges of size \(k\). For \(N\) a power of
    two, we give a polynomial-time algorithm constructing prefixes
    $\varnothing=\mathcal F_0
        \subset \mathcal F_1
        \subset \cdots
        \subset \mathcal F_N=\mathcal F$
    such that each layer \(\mathcal F_i\setminus\mathcal F_{i-1}\) is rainbow
    and $d_{\mathcal F_i}(v)\in[i-\beta,i+\beta]
        \ \forall v,\ i,$
    with $\beta=O(\sqrt{k}\log N).$

    \item \textbf{\(2\)-multicast scheduling.}
    We apply the rainbow prefix decomposition theorem to a balanced
    \(2\)-multicast scheduling problem. There are \(n\) active in-ports,
    \(2n\) out-ports, and \(n\) rounds. Each in-port sends one message per
    round, each message has two destinations, and each out-port is a destination
    of exactly \(n\) message-copies. The goal is to order the messages so that
    one message from each in-port appears in every round, while minimizing the
    maximum number of undelivered copies kept in the buffer. We obtain a polynomial-time schedule with maximum buffer
    occupancy at most $O(n\log n).$
\end{itemize}

\subsection{Related discrepancy work}

The study of discrepancy has produced several fundamental results. The
Beck--Fiala theorem~\cite{beck1981integer} states that if every element appears
in at most \(t\) sets, or equivalently if every column of a \(0\)-\(1\) matrix
has \(\ell_1\)-norm at most \(t\), then the discrepancy is at most $2t$. Beck and
Fiala conjectured the optimal bound \(O(\sqrt t)\). A substantial
generalization is the Komlós conjecture, which asks for \(O(1)\) discrepancy
for every matrix whose columns have Euclidean norm at most \(1\); see also
Spencer's formulation of vector balancing~\cite{spencer1986balancing}. Another
landmark result is Spencer's ``six standard deviations suffice''
theorem~\cite{spencer1985six}, which gives discrepancy \(O(\sqrt m)\) for set
systems with  $m$ sets.

A different line of work uses convex geometry. Banaszczyk~\cite{banaszczyk1998balancing}
proved a powerful vector-balancing theorem based on Gaussian measure, implying
an \(O(\sqrt{\log n})\) bound for the Komlós problem and an
\(O(\sqrt{t\log n})\) bound for Beck--Fiala instances. Banaszczyk also obtained
strong bounds for prefix discrepancy~\cite{banaszczyk2012series}, where one
asks not only for the final signed sum to be small, but also for all prefix
sums to remain small. Bansal and Jiang~\cite{bansal2026decoupling}
obtained a \(\widetilde O(\log^{1/4} n)\) bound for the Komlós
problem and proved the Beck--Fiala conjecture when \(t\ge\log^2 n\).
More recently, Guo, Fang, and Lu~\cite{guo2026vector} gave a
deterministic polynomial-time algorithm achieving \(O(1)\) Komlós discrepancy and hence
\(O(\sqrt t)\) Beck--Fiala discrepancy for every \(t\).

Ordered discrepancy problems also show that row order can play an essential
role. For instance, Newman, Neiman and Nikolov~\cite{newman2012beck} disproved
Beck's three-permutations conjecture by constructing three permutations whose
associated set system has discrepancy \(\Omega(\log n)\). Such examples suggest
that even when the columns have a simple ordered structure, logarithmic losses
can be unavoidable.

Beyond existential coloring theorems, discrepancy has become an important
algorithmic rounding tool. Bansal~\cite{bansal2010constructive} gave a
constructive algorithm for discrepancy minimization, and Lovett and
Meka~\cite{lovett2015constructive} developed a constructive random-walk
approach to partial coloring. These methods have influenced approximation
algorithms and LP rounding. Rothvoss~\cite{rothvoss2016better} used
discrepancy-theoretic ideas to obtain improved bin-packing approximations, while
Bansal and Nagarajan~\cite{bansal2017approximation} developed
approximation-friendly discrepancy rounding, extending the Lovett--Meka approach to matroid settings. Prefix discrepancy has
also found applications in scheduling: Bansal, Rohwedder and
Svensson~\cite{bansal2022flow} related prefix Beck--Fiala bounds to the
integrality gaps of LP relaxations for flow-time scheduling.

\subsection{Structure of the paper}
The paper is organized as follows. In each section, we start by
formulating the problem or setting under consideration, present our results, then discuss related
work (if any), and finally present the proofs of the corresponding results.

Section~\ref{sec:DiscreapancyResults} contains our two discrepancy results: rounding with hard constraints and a mountain-distance bound. In Section~\ref{sec:min-cost-unsplittable-flow} we apply
these tools to the problem of determining min-cost unsplittable flow for general multigraphs, and then to bounded-turn unsplittable flow in grids and tori. In Section~\ref{sec:RainbowPrefix} we discuss rainbow prefix hypergraph decomposition.  Section~\ref{sec:twoMulticast} is on 2-multicast scheduling. Finally, in Section~\ref{sec:Conclusions} we give several open problems.

\section{Discrepancy results}\label{sec:DiscreapancyResults}

In this section we prove two discrepancy results used throughout the
paper.
First, we prove our main rounding tool, Theorem~\ref{th:discHardConstraints}.

The second result introduces   a column
parameter \(\mu(A)\) measuring the total variation of each column along the row order.
We prove a tight bound
\[
    \disc(A)
    \le
    O(1)\cdot \mu(A)\log(2+\min\{m,n\})
    +
    2\|A\|_\infty .
\]

\subsection{Discrepancy with hard constraints}

In many rounding problems, some constraints encode feasibility and must be
preserved exactly. In our setting, given \(A,H,c\) and a fractional point
\(x\in[0,1]^n\) with \(Hx\in\mathbb{Z}^k\), we want to find
\(y\in\{0,1\}^n\) such that
\[
    Hy=Hx,
    \qquad
    \langle c,y-x\rangle\le0,
\]
while keeping
\[
    \|A(x-y)\|_\infty
\]
small. Thus the rounding must preserve the hard constraints $H$ exactly and must not
increase the linear objective $\la c ,x\ra$.

Next, we state and prove our main rounding theorem.
Let $\mathcal{M}(n,m)$ be the set of all matchings on $[n]$ of size at most
$\lfloor\min\{ n/2 , m \}\rfloor$. For $A \in \mathbb{R}^{m \times n}$, define
\[
    \mathcal{Q}(A)
    :=
    \left\{
        Q : Q \in \mathbb{R}^{m \times |M|},
        Q_{\ast,\{j',j''\}}
        =
        A_{\ast j'}-A_{\ast j''}
        \text{ for all } \{j',j''\}\in M, 
        \;
        M\in\mathcal{M}(n,m)
    \right\}.
\]

An extension of this theorem to block-diagonal hard-constraint
matrices with totally unimodular blocks is stated and proved in
Appendix~\ref{app:block-TU-hard-constraints}, where we also
describe how the discrepancy bounds depend on the maximum
number of rows in a block.

\begin{theorem}
\label{th:discHardConstraints}
Let \(A\in\mathbb{R}^{m\times n}\),
\(H\in\{0,1\}^{k\times n}\) with \(\|H_{\ast j}\|_1=1\) for all
\(j\in[n]\), and \(c\in\mathbb{R}^n\). Let \(\beta\ge0\) be such that
$
    \max_{Q\in\mathcal{Q}(A)} \disc(Q)\le \beta,
$
and assume access to an oracle which, for every \(Q\in\mathcal{Q}(A)\), returns
a coloring witnessing \(\disc(Q)\le\beta\).
Then for any \(x\in[0,1]^n\) with \(Hx\in\mathbb{Z}^k\), one can find
\(y\in\{0,1\}^n\) such that
\[
    Hy=Hx,
    \qquad
    \langle c,y-x\rangle\le0,
\qquad 
    \|A(x-y)\|_\infty
    \le
    \beta+\frac{2}{n}\|A\|_\infty .
\]
Moreover, apart from oracle calls, the algorithm runs in polynomial time and
uses at most \(\lceil 2\log_2 n\rceil\) oracle calls.
\end{theorem}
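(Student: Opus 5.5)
The plan is \emph{bit-by-bit rounding}: process the binary expansion of \(x\) from the least significant bit upward, and at each level use one oracle call on a suitable \(Q\in\mathcal Q(A)\). Set \(L:=\lceil 2\log_2 n\rceil\).

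\textbf{Step 1: discretization.} First I replace \(x\) by a point \(x^{(L)}\in(2^{-L}\mathbb Z)^n\cap[0,1]^n\) with
\[
Hx^{(L)}=Hx,\qquad \langle c,x^{(L)}-x\rangle\le 0,\qquad \|x^{(L)}-x\|_\infty\le 2^{-L}.
\]
Because the columns of \(H\) are unit vectors, \(H\) partitions \([n]\) into groups \(B_1,\dots,B_k\), and \(Hx\in\mathbb Z^k\) says each group sum is an integer. The set of points \(z\) with \(\lfloor 2^Lx_j\rfloor\le 2^Lz_j\le\lceil 2^Lx_j\rceil\) and \(Hz=2^LHx\) is the intersection of a box with a totally unimodular (partition) system. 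It is nonempty (it contains \(2^Lx\)), so it has an integral vertex minimizing \(\langle c,\cdot\rangle\), and dividing by \(2^L\) gives \(x^{(L)}\). The error this introduces is
\[
\|A(x-x^{(L)})\|_\infty\le \sum_j |A_{ij}|\,2^{-L}.
\]
I intend to bound this by \(\|A\|_\infty/n\) (with \(\|A\|_\infty\) the max column norm, so \(\sum_j|A_{ij}|\le n\|A\|_\infty\)), using \(2^{-L}\le n^{-2}\). This accounts for one of the two \(\|A\|_\infty/n\) terms; the other will come from the leftover small step in Step 3.

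\textbf{Step 2: one level.} Suppose \(x^{(\ell)}\in(2^{-\ell}\mathbb Z)^n\) with \(Hx^{(\ell)}=Hx\). Let \(S\) be the set of coordinates whose \(\ell\)-th bit is \(1\). Since every group sum is an integer, \(|S\cap B_t|\) is even for every \(t\). I pair up the elements of \(S\) inside each group arbitrarily, obtaining a matching \(M\), and call the oracle on the matrix \(Q\) with columns \(A_{\ast j'}-A_{\ast j''}\), \(\{j',j''\}\in M\). From the returned \(\chi\) I set
\[
x_{j'}\leftarrow x_{j'}+\chi_{e}2^{-\ell},\qquad x_{j''}\leftarrow x_{j''}-\chi_e 2^{-\ell}
\]
for each pair \(e=\{j',j''\}\). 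This clears bit \(\ell\), keeps each group sum (hence \(H\)) fixed, keeps the point in \([0,1]^n\), and changes \(Ax\) by \(2^{-\ell}Q\chi\), which has norm at most \(2^{-\ell}\beta\). Since \(-\chi\) is equally good for \(A\), I use whichever of \(\pm\chi\) does not increase \(\langle c,x\rangle\).

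\textbf{Step 3: summing.} Summing over \(\ell=L,\dots,1\) gives total error at most \(\sum_\ell 2^{-\ell}\beta\le\beta\). There is one oracle call per level, so at most \(L\) calls, and everything else is polynomial. At level \(1\) all coordinates are \(0\) or \(1\), so the final point is \(y\).

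\textbf{Main obstacle.} The family \(\mathcal M(n,m)\) only allows matchings with at most \(m\) pairs, but \(M\) may be larger. My first attempt is a linear-algebra pre-step at each level. Consider moves \(z=\sum_{e\in M}\lambda_e(e_{j'}-e_{j''})\). If \(|M|>m\), the system \(Az=0\) restricted to these directions has a nontrivial solution, and I choose its sign so that \(\langle c,z\rangle\le0\). Moving along \(z\) changes neither \(Ax\) nor \(Hx\) and does not increase cost. I move until some pair reaches a grid point of \(2^{-(\ell-1)}\mathbb Z\) (or a box bound), drop that pair, and repeat until \(|M|\le m\); then I call the oracle.

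The delicate point is that such moves produce non-dyadic values. I would therefore work throughout with "fractional parts relative to the current grid" rather than literal bits, re-snapping as needed. Whatever residual off-grid amount remains at the end would be controlled by the same \(2^{-L}\)-type estimate, absorbing it into the second \(\|A\|_\infty/n\) term. That is why the bound reads \(\frac{2}{n}\|A\|_\infty\). Making this residual bookkeeping rigorous is where I expect the real work to lie.
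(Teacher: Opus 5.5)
Your Steps 1--3 are correct on their own terms. The totally unimodular discretization is valid and in fact costs only $\frac1n\|A\|_\infty$. However, the argument has a genuine gap exactly where you flagged it, and your proposed repair does not close it. The oracle is only available for $Q\in\mathcal Q(A)$, i.e.\ for matchings with at most $m$ pairs. Your level-$\ell$ matching ranges over all of $[n]$ and can have up to $n/2\gg m$ pairs.

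Moving along a real kernel direction $z$ with $Az=0$ leaves the surviving pairs at arbitrary, non-dyadic offsets. From such a position a step of $\pm2^{-\ell}$ neither clears bit $\ell$ nor lands on the coarser grid $2^{-(\ell-1)}\mathbb Z$. So these pairs cannot be finished by the single coloring call of that level. Re-snapping them to the $2^{-L}$-grid reintroduces bits at levels $\ell+1,\dots,L$ that were already processed. That forces repeated passes, more than $\lceil2\log_2 n\rceil$ oracle calls, and accumulated errors that are not covered by $\beta+\frac2n\|A\|_\infty$. The ``residual bookkeeping'' is therefore not a technicality: as set up, the scheme does not prove the theorem.

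The missing idea is to sparsify once, globally, \emph{before} discretizing. Replace $x$ by a vertex $x'$ of $\{z\in[0,1]^n: Az=Ax,\ Hz=Hx\}$ that minimizes $\langle c,z\rangle$. This is free, since $Ax'=Ax$, $Hx'=Hx$ and $\langle c,x'-x\rangle\le0$. Let $F$ be the fractional support of $x'$ and $I$ the set of rows of $H$ meeting $F$.
\begin{itemize}
\item At a vertex, the columns of $\binom{A}{H_I}$ indexed by $F$ are linearly independent, so $|F|\le m+|I|$.
\item Each group in $I$ has an integral sum, so it contains at least two fractional coordinates. Since the groups are disjoint, $|F|\ge2|I|$.
\end{itemize}
Together these give $|F|\le 2m$.

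Now apply your Step 1 to $x'$ rather than to $x$. For coordinates outside $F$ the box is a single point, so they stay integral. Your Step 2 only ever moves coordinates with a nonzero bit, and these lie in $F$. Hence every level-$\ell$ matching has at most $|F|/2\le m$ pairs, so the corresponding $Q$ lies in $\mathcal Q(A)$. With this pre-step inserted, your Steps 2--3 go through verbatim with one oracle call per level. This gives error at most $\beta+\frac1n\|A\|_\infty$, since $\|A(x-y)\|_\infty=\|A(x'-y)\|_\infty$.
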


\begin{proof}
Let \(x'\) be an optimal vertex of the polytope 
$
    \{z\in[0,1]^n:Az=Ax,\ Hz=Hx\}
$
with respect to the objective \(\langle c,z\rangle\). Then
\[
    Ax'=Ax,
    \qquad
    Hx'=Hx,
    \qquad
    \langle c,x'-x\rangle\le0 .
\]
Let
$
    F:=\{j:0<x'_j<1\},
$
fix all coordinates outside \(F\), and let \(I\) be the rows of \(H\) that meet
\(F\). If \(F=\emptyset\), take \(y=x'\). Otherwise, put
$
    n':=|F|,
    k':=|I|.
$
Since \(x'\) is a vertex of the polytope, the columns indexed by \(F\) in $\binom{A}{H_I}$
are linearly independent, hence \(n'\le \mathrm{rank \ of \  } \binom{A}{H_{I}}\leq  m+k'\). Also, every row in \(I\)
contains at least two coordinates from \(F\); otherwise the unique fractional
coordinate in that row would be forced to be integral. Since every column of
\(H\) contains exactly one \(1\), these supports are disjoint, so \(n'\ge2k'\).
Thus $2k'\le n'\le m+k'$ implies $k'\le m$ and we have  $n'\le2m$.

We next make one more simplification: instead of rounding \(x'\) directly, we
replace it by a nearby vector \(w\) whose coordinates have binary expansion only
up to level \(L\). This replacement preserves the hard constraints, does not
increase the objective, and creates additive error at most
\(\frac{2}{n}\|A\|_\infty\).

Set $L:=\lceil 2\log_2 n\rceil .$
For \(j\in F\), write
$
    x'_j=\sum_{s=0}^{\infty} b_j^s2^{-s}
$ for some binary $b_j^s$
and define
$
    x''_j:=\sum_{s=0}^{L} b_j^s2^{-s}.
$
For \(j\notin F\), set \(x''_j:=x'_j\). Then
$
    0\le x'_j-x''_j\le 2^{-L}\le n^{-2}.
$

Relabel the rows in \(I\) by \([k']\), and define
$
    N_\ell:=\{j\in F:H_{\ell j}=1\} 
$ for all $\ell\in[k']$.
The sets \(N_1,\ldots,N_{k'}\) partition \(F\). For each \(\ell\in[k']\), let
$
    \Delta_\ell:=\sum_{j\in N_\ell}(x'_j-x''_j).
$
Since \(\sum_{j\in N_\ell}x'_j\) is integral and \(x''\) has binary expansion up
to level \(L\), the number \(2^L\Delta_\ell\) is an integer.

Starting from \(x''\), construct \(w\) independently on each \(N_\ell\): add
total mass \(\Delta_\ell\) inside \(N_\ell\), in steps of \(2^{-L}\), greedily
to coordinates with smallest \(c_j\), never exceeding \(1\). Finally set
\(w_j:=x'_j\) for \(j\notin F\). Then each row sum is restored, so
$Hw=Hx'=Hx.$ The greedy
rule adds it to the cheapest coordinates, hence
$\langle c,w-x'\rangle\le0.$
Also,
$w_j=\sum_{s=0}^{L} q_j^s2^{-s}
    \text{ for some } q_j^s\in\{0,1\}.$
Finally,
\[
    \|x'-x''\|_1\le n'2^{-L}\le \frac1n,
    \qquad
    \|x''-w\|_1=\sum_{\ell=1}^{k'}\Delta_\ell\le n'2^{-L}\le \frac1n.
\]
Since \(Ax'=Ax\), we get
$
    \|A(x-w)\|_\infty
    =
    \|A(x'-w)\|_\infty
    \le
    \|A\|_\infty\|x'-w\|_1
    \le
    \frac{2}{n}\|A\|_\infty .
$

It remains to round \(w\) to an integral vector. Let \(r\ge1\) be the largest level with
\(q_j^r=1\) for some \(j\). For every \(\ell\in[k']\), the number of
\(j\in N_\ell\) with \(q_j^r=1\) is even, since
\(\sum_{j\in N_\ell}w_j\in\mathbb{Z}\). Pair these variables inside
each \(N_\ell\), and let \(M\) be the resulting matching.

For every pair \(e=\{j',j''\}\in M\), fix an arbitrary orientation and define
$Q_{\ast e}:=A_{\ast j'}-A_{\ast j''}.$
Since \(|M|\le n'/2\le m\), we have \(Q\in\mathcal{Q}(A)\). Hence the
discrepancy oracle gives \(\chi\in\{-1,1\}^{M}\) with $\|Q\chi\|_\infty\le\beta.$
Replacing \(\chi\) by \(-\chi\), if necessary, we may also assume
$\sum_{e=\{j',j''\}\in M}\chi_e(c_{j'}-c_{j''})\le0.$
Define \(w'\) by
\[
    w'_{j'}:=w_{j'}+2^{-r}\chi_e,
    \qquad
    w'_{j''}:=w_{j''}-2^{-r}\chi_e
    \qquad
    \text{for } e=\{j',j''\}\in M,
\]
and leave all other coordinates unchanged. Then
\[
    Hw'=Hw,
    \qquad
    \langle c,w'-w\rangle\le0,
\qquad
    \|A(w'-w)\|_\infty
    =
    2^{-r}\|Q\chi\|_\infty
    \le
    2^{-r}\beta.
\]
Moreover, all level-\(r\) bits are eliminated.

Repeating this for \(r=L,L-1,\ldots,1\), we obtain \(y\in\{0,1\}^n\) with
\[
    Hy=Hw=Hx,
    \qquad
    \langle c,y-w\rangle\le0,
\qquad
    \|A(w-y)\|_\infty
    \le
    \sum_{r=1}^{L}2^{-r}\beta
    \le \beta.
\]
Combining this with the bound on \(\|A(x-w)\|_\infty\), we obtain
\[
 Hy=Hx, \qquad
    \langle c,y-x\rangle\le0
, \qquad
    \|A(x-y)\|_\infty
    \le
    \beta+\frac{2}{n}\|A\|_\infty .
\]
The construction uses one discrepancy-oracle call per level, hence at most
\(L=\lceil 2\log_2 n\rceil\) calls.
\end{proof}

Using the Jiang--Reis algorithm~\cite{jiang2022tighter}
as the oracle gives the following bound in terms of the
hereditary discrepancy of \(A\).
The proof appears in Appendix~\ref{app:discHardConstraints}.

\begin{corollary} \label{cor:discHardConstraints} Let \(A\in\mathbb{R}^{m\times n}\), \(H\in\{0,1\}^{k\times n}\) with \(\|H_{\ast j}\|_1=1\) for all \(j\in[n]\), and \(c\in\mathbb{R}^n\). Let $\herdisc(A):= \max_{J \subseteq [n]} \disc(A_{J})$. Then, for every \(x\in[0,1]^n\) with \(Hx\in\mathbb Z^k\),
a randomized polynomial-time algorithm finds, with high probability, \(y\in\{0,1\}^n\) such that \[ Hy=Hx, \qquad \langle c,y-x\rangle\le0, \qquad \|A(x-y)\|_\infty \le O(1) \cdot \log (2+m) \cdot \herdisc(A) . \] \end{corollary}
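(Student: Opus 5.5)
We will derive the corollary from Theorem~\ref{th:discHardConstraints}, using the Jiang--Reis algorithm~\cite{jiang2022tighter} as the discrepancy oracle. This requires two things: a uniform bound on $\disc(Q)$ for $Q\in\mathcal{Q}(A)$ in terms of $\herdisc(A)$, and a way to absorb the additive term $\frac{2}{n}\|A\|_\infty$. The Jiang--Reis algorithm, given a matrix $Q$ with $m$ rows, finds in randomized polynomial time a coloring with $\|Q\chi\|_\infty\le O(\sqrt{\log m})\cdot\detlb(Q)$, or a comparable bound of the form $O(\log(2+m))\cdot\herdisc(Q)$, with high probability. So it suffices to show $\herdisc(Q)\le O(1)\cdot\herdisc(A)$ for every $Q\in\mathcal{Q}(A)$, or the analogous bound for $\detlb$.

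\textbf{Step 1: hereditary discrepancy of difference matrices.} Fix $Q\in\mathcal Q(A)$ given by a matching $M$, and a submatching $M'\subseteq M$. Let $J\subseteq[n]$ be the set of all $2|M'|$ endpoints of $M'$. Take a coloring $\chi\in\{\pm1\}^J$ with $\|A_J\chi\|_\infty\le\herdisc(A)$. We need to turn $\chi$ into a coloring of the pairs. For each pair $\{j',j''\}$ oriented as in $Q$:
\begin{itemize}
\item if $\chi_{j'}=-\chi_{j''}$, the pair contributes $\chi_{j'}(A_{\ast j'}-A_{\ast j''})$, which is already a signed column of $Q$;
\item if $\chi_{j'}=\chi_{j''}$, the pair contributes a sum of columns, which is not of the right form.
\end{itemize}
To handle the second case, apply the hereditary bound again to the set $S$ of ``agreeing'' pairs, restricted to one endpoint per pair. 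This yields an iterative halving argument, but it loses a logarithmic factor.

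The cleaner route is via the determinant lower bound. Any square submatrix of $Q$ is a linear combination of columns of $A$ restricted to the same rows. By multilinearity, its determinant expands as a sum of at most $2^t$ determinants of $t\times t$ submatrices of $A$. Hence $\detlb(Q)\le 2\,\detlb(A)$, since $(2^t)^{1/t}=2$. Combining this with $\detlb(A)\le\herdisc(A)$ (Lov\'asz--Spencer--Vesztergombi) and the Jiang--Reis bound gives
\[
\disc(Q)\le O(\sqrt{\log m})\cdot\detlb(Q)\le O(\log(2+m))\cdot\herdisc(A),
\]
with plenty of room to spare. We expect this determinant expansion to be the main technical point. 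It must also be checked against the exact form of the Jiang--Reis guarantee, in particular that the logarithm is in $m$ and not in the number of columns. This is harmless here, because $|M|\le m$.

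\textbf{Step 2: oracle calls and success probability.} Theorem~\ref{th:discHardConstraints} makes only $\lceil 2\log_2 n\rceil$ oracle calls. Repeating each call polynomially many times and checking $\|Q\chi\|_\infty$ directly boosts the success probability, and a union bound then gives high probability overall.

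\textbf{Step 3: the additive term.} It remains to absorb $\frac2n\|A\|_\infty$. Every single column of $A$ forms a subsystem, so $\herdisc(A)\ge\max_{i,j}|A_{ij}|$. A column norm $\|A_{\ast j}\|_1$, however, may be as large as $m\max|A_{ij}|$. There are two ways to handle this. The first is to increase the precision $L$ to about $\log_2(nm)$ in the proof of Theorem~\ref{th:discHardConstraints}; this costs only $O(\log(nm))$ oracle calls and changes the additive term to at most $\frac{2}{nm}\|A\|_1\le 2\max|A_{ij}|\le 2\herdisc(A)$. The second is to note that the paper's $\|A\|_\infty$ is the maximum column $\ell_\infty$-norm, which is directly at most $\herdisc(A)$. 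Either way, the additive term is $O(\herdisc(A))$, which completes the bound.
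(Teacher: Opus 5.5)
Your proposal is correct and follows the paper's proof. Multilinearity over the pairs gives $\detlb(Q)\le 2\detlb(A)$, Lov\'asz--Spencer--Vesztergombi together with Jiang--Reis then yields an oracle with $\beta=O(\log(2+m)\herdisc(A))$, and single-column restrictions give $\|A\|_\infty\le\herdisc(A)$, which is your second way of absorbing the additive term. Two small corrections: the Lov\'asz--Spencer--Vesztergombi bound is $\detlb(A)\le 2\herdisc(A)$, and the Jiang--Reis guarantee is $O(\sqrt{\log(2+p)\log(2+q)})\,\detlb$ for a $p\times q$ matrix, so there is no room to spare, and your remark that $Q$ has at most $m$ columns is exactly what produces the $\log(2+m)$ factor.
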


\subsection{Mountain distance bound}

Many classical bounds in discrepancy theory control discrepancy
by a parameter of the columns. For example, the Beck--Fiala theorem gives
a bound in terms of the maximum column \(\ell_1\)-norm, while the
Komlós bound of~\cite{guo2026vector} gives
\(\disc(A)=O(\|A\|_2)\). In both cases, the relevant parameter measures the size
of each column, but it is insensitive to the order of the rows.
In this section we introduce a different column parameter, which captures
how much a column changes as we move from one row to the next. For a fixed
ordering of the rows, we define the mountain distance of \(A\) by
\[
    \mu(A)
    :=
    \max_{j\in[n]}
    \sum_{i\in[m-1]}
        |A_{i+1,j}-A_{i,j}|.
\]

One can think of a column \(A_{*j}\) as a mountain profile: the entry
\(A_{i,j}\) is the height at position \(i\). Then
$
    \sum_{i\in[m-1]} |A_{i+1,j}-A_{i,j}|
$
is the total vertical distance traveled while walking along this profile.
Hence \(\mu(A)\) is small when every column changes only a little along
the chosen row order. In particular, if \(\mu(A)=0\), then all rows of
\(A\) are identical.

For a \(0\)-\(1\) matrix,
\(\sum_{i\in[m-1]} |A_{i+1,j}-A_{i,j}|\) counts the number of switches in
column \(j\). For instance, a column of the form
\[
(0,\dots,0,
\underbrace{1,\dots,1}_{1},
0,\dots,0,\dots,0,
\underbrace{1,\dots,1}_{2},
0,\dots,0,\dots,0,
\underbrace{1,\dots,1}_{t},
0,\dots,0)
\]
has total variation \(2t\). Hence, if every column is a union of at most
\(t\) consecutive intervals, then \(\mu(A)\leq 2t\).

Unlike the Beck--Fiala parameter
\(\|A\|_1=\max_j\|A_{*j}\|_1\), the parameter \(\mu(A)\) sees the order of
the support. Indeed, one-block columns may be dense, with
\(\|A\|_1=\Theta(m)\), so the Beck--Fiala estimate gives only
\(O(\sqrt m)\), while their hereditary discrepancy is at most \(1\). Our
theorem gives
\[
    \disc(A)\leq O(t\log(2+\min\{m,n\})).
\]
In this sense, \(\mu(A)\) interpolates between consecutive-one columns and
unstructured sparse columns: 
\(\mu(A)\leq 2\|A\|_1\), but interval structure can make \(\mu(A)\) much
smaller than \(\|A\|_1\).

Since discrepancy is unchanged by permuting rows, it is natural to optimize
over the row order. We define 
\[
    \mu^{*}(A)
    :=
    \min_{\pi\in S_m}
    \max_{j\in[n]}
    \sum_{i\in[m-1]}
        |A_{\pi(i+1),j}-A_{\pi(i),j}|.
\]

The next theorem 
generalizes the beautiful rounding argument of
\cite{eisenbrand2013bin} for monotone matrices and shows that small
mountain distance implies small hereditary discrepancy, up to a logarithmic
factor. 
Matrices with small mountain distance arise naturally in rounding problems
where the constraints have an underlying order. For example, in scheduling
and resource-allocation problems, rows often correspond to time slots or
ordered resources, while each column corresponds to a job whose activity
occupies one interval, or a small number of intervals, in this order. Then
the column changes only at the endpoints of these intervals, and hence has
small total variation. Such interval-type structures appear in
\cite{bar2001unified,bar2005local,bar2005using}. In
Section~\ref{subsec:min-cost-rturn-chip-wiring}, we use this theorem together with Theorem~\ref{th:discHardConstraints} for
Min-Cost Bounded-Turn Unsplittable Flow in grids and tori.

For the theorem below, the row order is fixed. Applying the theorem after
an optimal row permutation gives the same bound with \(\mu^{*}(A)\) in
place of \(\mu(A)\).

\begin{theorem}\label{th:mountainDistUB}
Let \(A\in \mathbb{R}^{m\times n}\) and \(c\in\mathbb{R}^n\).
Then for every \(x \in [0,1]^{n}\) there exists
\(y \in \{0,1\}^{n}\) such that
$\langle c,y-x\rangle\leq 0$
and
\[
    \|A(x-y)\|_{\infty}
    \leq
    O(1)\cdot \mu(A)\log(2+\min\{m,n\})
    +
    \|A\|_{\infty}.
\]
Moreover, such a vector \(y\) can be found in polynomial time. Consequently,
\[
    \disc(A)
    \leq
    O(1)\cdot \mu(A)\log(2+\min\{m,n\})
    +
    2\|A\|_{\infty}.
\]
\end{theorem}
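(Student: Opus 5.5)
We follow the Eisenbrand--Pálvölgyi--Rothvoss rounding for monotone matrices. The idea is to reduce to few fractional coordinates, round bit level by bit level, and use the ordered structure of the rows to control each level via a telescoping argument.

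\textbf{Step 1: reduce to a vertex.} Replace \(x\) by an optimal vertex \(x'\) of \(\{z\in[0,1]^n: Az=Ax\}\) with respect to \(\langle c,z\rangle\). Then the fractional support \(F\) has \(|F|\le m\), and the objective does not increase. We may also restrict attention to \(n\le m\) columns; this is where \(\min\{m,n\}\) enters. As in the proof of Theorem~\ref{th:discHardConstraints}, truncate the binary expansions to \(L=O(\log(2+\min\{m,n\}))\) bits, repairing the objective greedily. This costs at most \(\|A\|_\infty\) (and in fact \(O(\|A\|_\infty/n)\)).

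\textbf{Step 2: rewrite in difference coordinates.} Let \(D\in\mathbb R^{(m-1)\times n}\) be the row-difference matrix, \(D_{i,j}=A_{i+1,j}-A_{i,j}\), so that \(\|D_{\ast j}\|_1\le\mu(A)\). For any update \(z\), we have
\[(Az)_i=(Az)_1+\sum_{i'<i}(Dz)_{i'}.\]
So it suffices to control the first row \((Az)_1\) and all prefix sums of \(Dz\). In the monotone case of \cite{eisenbrand2013bin}, the columns of \(D\) are nonnegative with \(\ell_1\)-norm at most \(\mu\). There, at each bit level, one sorts the fractional columns along the rows and pairs consecutive ones, so each prefix sum changes by \(O(\mu)\) per level.

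\textbf{Step 3: the pairing at each level.} For general signs, split \(D=D^+-D^-\). Encode each column \(j\) by the two monotone cumulative profiles \(P^\pm_{i,j}=\sum_{i'<i}D^\pm_{i',j}\); each is nondecreasing with total rise at most \(\mu\). At bit level \(r\), we need a pairing of the level-\(r\) coordinates and signs \(\chi\) such that \(\max_i|\sum_e\chi_e(P^+_{i,e}-P^-_{i,e})|=O(\mu)\) together with \(|\sum_e\chi_e A_{1,e}|\le O(\mu)+\|A\|_\infty\). This step is the main obstacle, because one simultaneous pairing must serve two monotone families.

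The first thing I would try is the following. Each monotone profile with rise at most \(\mu\) is \(\mu\) times an average of step functions \(\mathbf 1[i\ge\tau]\). A signed combination of step functions with thresholds \(\tau\) is handled by sorting the thresholds and alternating signs, as in the interval/consecutive-ones case with discrepancy 1. Bounding each profile by \(O(\mu)\) via such a sorted alternating pairing is natural; if one pairing cannot handle both families, use a Beck--Fiala/iterated rounding step on the at most \(O(1)\)-per-column interval constraints. Failing that, fall back to Banaszczyk-type prefix balancing with constant loss. The first-row constraint adds at most \(\|A\|_\infty\) after choosing the global sign of \(\chi\) for the objective, handled as in Theorem~\ref{th:discHardConstraints}.

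\textbf{Step 4: sum over levels.} The level-\(r\) update scales by \(2^{-r}\), so the levels sum geometrically. However, the truncation/vertex re-solving may have to be redone at each level, which introduces one \(O(\mu)\) loss per level. Over \(L=O(\log(2+\min\{m,n\}))\) levels this gives the stated \(O(\mu\log)\) term, plus \(\|A\|_\infty\) from the last fractional step. The discrepancy corollary follows by applying the statement to \(x=\tfrac12\mathbf 1\) with \(c=0\) and setting \(\chi=2y-\mathbf 1\), which doubles the bound.
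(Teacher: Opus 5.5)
\paragraph{There is a genuine gap in Step 3, and Step 4 does not justify the logarithm.}
At level $r$ you want signs $\chi$ on the level-$r$ set $S$ such that every row-prefix sum $\sum_e\chi_e(P^+_{i,e}-P^-_{i,e})$ is $O(\mu)$ and the first-row error is $O(\mu)+\|A\|_\infty$. Since $(A_S\chi)_i=(A_S\chi)_1+\sum_e\chi_e(P^+_{i,e}-P^-_{i,e})$, this is exactly the claim $\disc(A_S)\le O(\mu(A))+\|A\|_\infty$. That claim is false. The three-permutation matrix $P$ from the paper's tightness example has $\mu(P)=O(1)$ and $\|P\|_\infty=1$, yet $\disc(P)=\Omega(\log n)$.

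The counterexample arises inside your own procedure. $P$ has full column rank, because consecutive prefixes of one permutation differ by unit vectors. So for $x=\frac12\mathbf 1$ your vertex step returns $x$ itself, and the only level has $S=[n]$. Hence none of your fallbacks can rescue Step 3:
\begin{itemize}
\item Sorting thresholds with alternating signs handles columns that are single steps. A column with variation $\mu$ carries several steps under one common sign, which is precisely the permutation obstruction.
\item Beck--Fiala applied to $D$ bounds the individual entries of $D\chi$, not their row-prefix sums.
\item Prefix-balancing results control prefixes in the column order, which is a different problem, and they do not give a constant anyway.
\end{itemize}
Step 4 is inconsistent for the same reason. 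If each level cost $O(\mu)2^{-r}$, the total would be $O(\mu)+\|A\|_\infty$, contradicting the lower bound. The ``$O(\mu)$ loss per level from re-solving'' has no source, since re-solving the vertex LP preserves $Az$ exactly. Asking instead for an $O(\mu\log)$ bound on every $\disc(A_S)$ would be circular, because that is the discrepancy statement being proved.

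\paragraph{The missing idea is to recurse on rows, not on bits.}
\begin{itemize}
\item Take a cost-minimizing vertex $x''$ of $\{u\in[0,1]^n: A'u=A'x',\ u_j=x'_j \text{ for integral } x'_j\}$. It has at most $m'$ fractional coordinates $F$, so $\sum_{i}\sum_{j\in F}|A'_{i+1,j}-A'_{i,j}|\le \min\{m',n\}\,\mu(A)$.
\item By averaging, a constant fraction of consecutive pairs satisfy $\sum_{j\in F}|A'_{i+1,j}-A'_{i,j}|\le 2\mu(A)\min\{1,n/m'\}$.
\item Coordinates that are integral are frozen forever. So for such a pair, the final errors of rows $i$ and $i+1$ differ by at most this amount, however $F$ is rounded later.
\item Delete row $i+1$ from each pair in a disjoint family of such pairs, charging it to row $i$. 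Then re-solve for a vertex of the smaller system and recurse.
\item The number of rows shrinks by a factor $3/4$ per round. Rounds with $m'>n$ contribute a geometric $O(\mu(A))$, and only $O(\log(2+\min\{m,n\}))$ rounds remain after that.
\item Each row's error telescopes along a path to a root whose error is at most $\|A\|_\infty$: the last single fractional coordinate is rounded in the cost-nonincreasing direction.
\end{itemize}
This is where both the $\log$ factor and the additive $\|A\|_\infty$ come from.
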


\begin{proof}
The algorithm is recursive. At every recursive call it receives a pair
\((A',x')\), where \(A'\) is obtained from \(A\) by deleting some rows,
while preserving their original order. Initially \(A'=A\) and \(x'=x\).
We maintain the invariants
\[
    A'x'=A'x
    \qquad\text{and}\qquad
    \langle c,x'-x\rangle\leq 0.
\]
Moreover, once a coordinate of \(x'\) becomes integral, it is never
changed again.

Consider one recursive call with input \((A',x')\). Let \(m'\) be the
number of rows of \(A'\). First replace \(x'\) by a vertex \(x''\) of the
polytope
\[
    P:=\{u\in[0,1]^n:
        A'u=A'x',
        \ u_j=x'_j \text{ for all } j \text{ with } x'_j\in\{0,1\}\}
\]
which minimizes \(\langle c,u\rangle\) over \(P\). Since \(x'\in P\), we
have $\langle c,x''\rangle\leq \langle c,x'\rangle.$
If \(n''\) denotes the number of
fractional coordinates of \(x''\), then $ n''\leq m'.$

If \(x''\in\{0,1\}^n\), we stop and output \(y=x''\). 
If \(m'=1\), then \(n''\leq1\). Let \(j\) be the remaining fractional
coordinate, if it exists. We round it in the direction which does not
increase the objective: if \(c_j\geq0\), set \(y_j=0\), and if \(c_j<0\),
set \(y_j=1\). Then 
$
    \langle c,y-x\rangle
    =
    \langle c,y-x''\rangle+\langle c,x''-x\rangle
    \leq0.
$
For the unique remaining row this creates error at most $\|A\|_{\infty}.$

Assume now that \(m'\geq2\). Let
$F:=\{j\in[n]:0<x''_j<1\}$
be the set of fractional coordinates, and define
$ Z(x''):=\{z\in\{0,1\}^n:
        z_j=x''_j \text{ for every } j\notin F\}.$
 Recall that $|F|=n''$, hence 
\[
\begin{aligned}
    \sum_{i\in[m'-1]}
    \max_{z\in Z(x'')}
    \left|
        \left\langle A'_{i+1,*}-A'_{i,*},z-x''\right\rangle
    \right|
    &\leq
    \sum_{i\in[m'-1]}\sum_{j\in F}
        |A'_{i+1,j}-A'_{i,j}|        \\
    &\leq
    n''\mu(A')                       \\
    &\leq
    \min\{m',n\}\mu(A).
\end{aligned}
\]
Here we used \(n''\leq m'\), \(n''\leq n\), and also that deleting rows
cannot increase \(\mu\). Indeed, if a row is deleted, then for every
column \(j\),
\[
    |A'_{i+1,j}-A'_{i-1,j}|
    \leq
    |A'_{i+1,j}-A'_{i,j}|
    +
    |A'_{i,j}-A'_{i-1,j}|,
\]
and therefore the total variation of each column cannot increase, thus $\mu(A'')\leq \mu(A')\leq \mu(A).$

By Markov's inequality, there exists a set
\(S\subseteq[m'-1]\) with
 $|S|\geq \frac{m'}2$
such that for every \(i\in S\),
\[
    \max_{z\in Z(x'')}
    \left|
        \left\langle A'_{i+1,*}-A'_{i,*},z-x''\right\rangle
    \right|
    \leq
    2\mu(A)\min\left\{1,\frac{n}{m'}\right\}.
\]
Choose \(T\subseteq S\) such that the pairs
\(\{i,i+1\}\), \(i\in T\), are pairwise disjoint and
 $|T|\geq \frac{|S|}{2}\geq \frac{m'}4.$
For every \(i\in T\), delete row \(i+1\) from \(A'\). Let the resulting
matrix be \(A''\). Then at most a \(3/4\)-fraction of the rows remains:
 $\#\operatorname{rows}(A'')\leq \frac34 m'.$
We recurse on \((A'',x'')\).

It remains to bound the error of the deleted rows. Suppose row \(i+1\)
was deleted in some recursive call and was paired with row \(i\), which
remained in the matrix. Let \(y\) be the final vector returned by the
algorithm. Since all coordinates integral in \(x''\) are fixed forever,
we have \(y\in Z(x'')\). Also, at the moment of deletion,
$A'x''=A'x.$
Therefore
\[
\begin{aligned}
    &
    \left|
        \langle A'_{i+1,*},y-x\rangle
        -
        \langle A'_{i,*},y-x\rangle
    \right|                                      \\
    &\qquad =
    \left|
        \left\langle A'_{i+1,*}-A'_{i,*},y-x''\right\rangle
    \right|                                      \\
    &\qquad \leq
    2\mu(A)\min\left\{1,\frac{n}{m'}\right\}.
\end{aligned}
\]
Thus every deleted row is connected to a row that survives that recursive
call, and their final errors differ by at most $2\mu(A)\min\left\{1,\frac{n}{m'}\right\},$
where \(m'\) is the number of rows at that call.

Now form a tree on the rows of \(A\) as follows. Whenever a row is deleted,
connect it to the row with which it was paired. If the algorithm stops
while several rows are still present, then all these remaining rows have
zero final error, so we connect them arbitrarily by zero-cost edges and
choose one of them as the root. If the algorithm stops with one row ($m'=1$ case) and
one fractional coordinate, the root has final error at most
$\|A\|_{\infty}$.
Thus, in all cases, the root error is at most \(\|A\|_{\infty}\).

Let \(m_t\) be the number of rows in the \(t\)-th recursive call. Since in
every call at most a \(3/4\)-fraction of the rows survives,
$m_{t+1}\leq \frac34 m_t.$
Along any path in the tree, the error accumulated at level \(t\) is at
most $2\mu(A)\min\left\{1,\frac{n}{m_t}\right\}.$
Therefore the total error accumulated along any root-to-leaf path is at
most $2\mu(A)\sum_t \min\left\{1,\frac{n}{m_t}\right\}.$
We claim that
\[
    \sum_t \min\left\{1,\frac{n}{m_t}\right\}
    =
    O\bigl(\log(2+\min\{m,n\})\bigr).
\]
Indeed, if \(m\leq n\), then every summand is at most \(1\), and there
are \(O(\log(2+m))\) levels. 
Now suppose \(m>n\). The levels with \(m_t>n\) contribute only \(O(1)\):
if \(m_t>n\), then the summand is \(n/m_t\), and since
\(m_{t+1}\leq \frac34 m_t\), these terms form a geometrically increasing
sequence bounded by a constant at the last such level. Once \(m_t\leq n\),
there are only \(O(\log(2+n))\) further levels, each contributing at most
\(1\). Hence the whole sum is \(O(\log(2+n))\), as required.

By the triangle inequality along the unique path from any row to the root,
for every original row \(r\), $|\langle A_{r,*},y-x\rangle|
    \leq
    \|A\|_{\infty}
    +
    O(1)\cdot \mu(A)\log(2+\min\{m,n\}).$
Together with \(\langle c,y-x\rangle\leq0\), this proves the rounding
statement.

Finally, the desired discrepancy bound follows as an easy consequence of the
rounding statement. Apply it with \(x=\frac12\mathbf 1\) and \(c=0\). We obtain
\(y\in\{0,1\}^{n}\) such that
\[
    \left\|
        A\left(\frac12\mathbf 1-y\right)
    \right\|_{\infty}
    \leq
    O(1)\cdot \mu(A)\log(2+\min\{m,n\})
    +
    \|A\|_{\infty}.
\]
Setting \(\chi:=\mathbf 1-2y\in\{\pm1\}^{n}\), we get $\|A\chi\|_{\infty}
    \leq
    O(1)\cdot \mu(A)\log(2+\min\{m,n\})
    +
    2\|A\|_{\infty}.$
\end{proof}

 Next we show that the bound above is best possible up to an absolute
constant factor.  The construction
is based on the lower bound of \cite{newman2012beck} for three permutations.

Given permutations
\(\pi_1,\dots,\pi_q\) of \([n]\), their matrix is the prefix-incidence
matrix \(P\in\{0,1\}^{qn\times n}\) whose rows are indexed by
\((a,t)\in[q]\times[n]\) and
\[
    P_{(a,t),j}
    :=
    \mathbf 1\bigl[j\in\{\pi_a(1),\dots,\pi_a(t)\}\bigr].
\]
Thus each block consists of the prefixes of one permutation.

\begin{theorem}[\cite{newman2012beck}]
Let \(\ell\in\mathbb{N}\) and \(n=3^\ell\). Then there exists a
matrix \(P\in\{0,1\}^{3n\times n}\) of three permutations such that $\disc(P)=\Omega(\log n).$
\end{theorem}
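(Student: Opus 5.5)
The plan is to use the recursive ternary construction of Newman, Neiman and Nikolov and a potential argument along a root-to-leaf path of the ternary tree.

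\textbf{Construction.} Identify \([n]\) with strings \(\{0,1,2\}^\ell\), i.e.\ with the leaves of a complete ternary tree of depth \(\ell\). Each permutation \(\pi_a\) is determined recursively by an ordering of the three children at every internal node. The leaves of each subtree appear contiguously in \(\pi_a\), ordered by this child ordering and then recursively inside each child.
\begin{itemize}
\item \(\pi_1\) uses the order \((0,1,2)\) at every node.
\item \(\pi_2\) and \(\pi_3\) use two other orders of the children, chosen cyclically shifted or reversed so that each child is preceded by different sets of siblings in the three permutations.
\end{itemize}
I would choose the orders depending on the parity of the depth, so that the potential step below goes through.

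\textbf{Reduction to offsets and parity.} Fix a coloring \(\chi\in\{\pm1\}^n\). For a node \(v\) let \(s(v)\) be the sum of \(\chi\) over the leaves below \(v\). Since each subtree has \(3^k\) leaves, an odd number, \(s(v)\) is an odd integer, so \(|s(v)|\ge1\). This is the only place where \(\chi\in\{\pm1\}^n\) is used, and it is what prevents perfect cancellation. Because each subtree is an interval in every \(\pi_a\), the prefix sum of \(\pi_a\) ending just before the block of \(v\) equals an offset \(d_a(v)\). This offset is the sum of \(s(u)\) over the siblings \(u\) of ancestors of \(v\) (and of \(v\) itself) that precede them in \(\pi_a\). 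In particular \(\disc(P)\ge \max_{a,v}|d_a(v)|\). When we pass from \(v\) to a child \(v_i\), each offset \(d_a\) gains the sum of the \(s\)-values of the children preceding \(v_i\) in the order \(\pi_a\) uses at \(v\).

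\textbf{Potential growth, which is the main step.} I would define a potential \(\Phi(v)\), for instance \(\Phi(v)=\max_a d_a(v)-\min_a d_a(v)\), or a signed combination such as \(\max_a(\pm d_a(v))\). The goal is to show that at every internal node \(v\) some child \(v_i\) satisfies \(\Phi(v_i)\ge\Phi(v)+c\) for an absolute constant \(c>0\). The case analysis is over the sign pattern of the children sums \(a,b,c\), which are odd and satisfy \(a+b+c=s(v)\). With three different child orders, the increments received by \((d_1,d_2,d_3)\) at the three children are partial sums such as \(0,a,a+b\) for \(\pi_1\) and their cyclic or reversed analogues for \(\pi_2,\pi_3\). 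One shows that the increments cannot keep all three offsets bunched together at every child: some child spreads them apart by at least \(1\), or at least by \(1\) every two levels. Iterating from the root down to a leaf gives \(\Phi\ge c\ell\). Then \(\max_a|d_a|\ge \Phi/2=\Omega(\ell)=\Omega(\log n)\), and this is attained at prefix rows of \(P\).

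I expect the main obstacle to be choosing the three child orders and the potential so that the one-step gain holds in every sign case. If the cleanest potential fails, my first fallback is a potential that also tracks the current subtree sum, such as \(\max_a |d_a(v)+\lambda_a s(v)|\) with suitable weights \(\lambda_a\). The second fallback is to run the argument over two levels at a time.
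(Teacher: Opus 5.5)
\textbf{There is a genuine gap: the step that carries the whole theorem is missing.} The paper gives no proof of this statement; it is quoted from Newman--Neiman--Nikolov, so your sketch has to stand on its own. Your reductions are sound: subtrees are intervals in every $\pi_a$, subtree sums are odd, and every prefix sum is an offset $d_a(v)$ or $d_a(v)+\chi(v)$.

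However, you never fix the child orders or the potential. The load-bearing step is a potential bounded by $O(\|P\chi\|_\infty)$ that gains a constant at some child of every node, and you only announce it. In the form you state it, it is false for the natural construction. Take the cyclic orders $(0,1,2)$, $(2,0,1)$, $(1,2,0)$ for $\pi_1,\pi_2,\pi_3$ at every node. If a node has offsets $d$ and children sums $s_0,s_1,s_2$, its children receive offsets $d+(0,s_2,s_1+s_2)$, $d+(s_0,s_0+s_2,0)$ and $d+(s_0+s_1,0,s_1)$. If the root's children sums satisfy $s_1=1$, $s_2=-1$, then its child $0$ has $d=(0,-1,0)$, with spread $1$. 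If that node's own children sums are $(1,-1,1)$, its children get $(0,0,0)$, $(1,1,0)$, $(0,-1,-1)$, with spreads $0,1,1$. So $\max_a d_a-\min_a d_a$ need not grow. The deferred ``sign case analysis'', including the two-level fallback, is exactly the part that is not proved.

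\textbf{A potential that does work with the cyclic orders.} Sum the offsets rather than taking their spread. Put $\Psi(v):=d_1(v)+d_2(v)+d_3(v)+\tfrac32 s(v)$. A direct computation shows that passing to child $i$ changes $\Psi$ by $g_i=\tfrac12(s_{i-1}-s_{i+1})$, with indices mod $3$. These are integers summing to $0$, so some $g_i\ge 1$ unless $s_0=s_1=s_2$. In that degenerate case $|s(v_i)|=|s(v)|/3$, which suggests taking $\Phi(v):=\Psi(v)-\tfrac12\log_3|s(v)|$.

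\begin{itemize}
\item \emph{Equal case.} Every child gains exactly $\tfrac12$.
\item \emph{Unequal case.} Let $R:=\max_j s_j-\min_j s_j\ge 2$. Some child has $g_i\ge\max\{1,R/4\}$. Since $|s_i|\le(|s(v)|+2R)/3$, the logarithmic loss is at most $\tfrac12\log_3\frac{1+2R}{3}$. This still leaves a gain of at least $\tfrac12$; the minimum occurs at $R=4$.
\item \emph{Conclusion.} Replace $\chi$ by $-\chi$ if needed so that $s(\mathrm{root})>0$; then $\Phi(\mathrm{root})>0$. Following a gaining child at each level, you reach a leaf $x$ with $\Psi(x)\ge \ell/2$. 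Now $\Psi(x)$ is half the sum of the six prefix sums $d_a(x)$ and $d_a(x)+\chi(x)$. Hence $\|P\chi\|_\infty\ge \ell/6=\Omega(\log n)$.
\end{itemize}

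An explicit invariant of this kind, together with its case check, is what your proposal lacks; as written it is a plan rather than a proof.
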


Next we show that a scaled matrix of three permutations gives tightness of
Theorem~\ref{th:mountainDistUB} up to a constant factor.

\begin{theorem}
Let \(\ell\in\mathbb{N}\), let \(n=3^\ell\), and let
\(\beta>0\). Then there exists a matrix
\(A\in \{0,\beta\}^{3n\times n}\) such that
$\mu(A)=\Theta(\beta)
    \text{ and }
    \disc(A)=\Theta(1)\cdot \mu(A) \cdot \log n .$
\end{theorem}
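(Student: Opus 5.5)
Take $A:=\beta P$, where $P\in\{0,1\}^{3n\times n}$ is the three-permutation matrix of \cite{newman2012beck} from the preceding theorem. Order the rows block by block: all prefixes of $\pi_1$ in increasing $t$, then those of $\pi_2$, then those of $\pi_3$. The plan has two parts. We compute $\mu(A)$ under this order, and we transfer the discrepancy lower bound by scaling.

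\emph{Mountain distance.} Fix a column $j$. Inside block $a$ the entries $P_{(a,t),j}$ are monotone in $t$. They equal $0$ for $t<\pi_a^{-1}(j)$ and $1$ afterwards, so each block contributes exactly one switch, of size $\beta$. Between consecutive blocks we pass from row $(a,n)$, which has entry $1$ since the full prefix is $[n]$, to row $(a+1,1)$, whose entry is $\mathbf 1[\pi_{a+1}(1)=j]$. This costs at most $\beta$. Hence the total variation of column $j$ is at most $3\beta+2\beta=5\beta$.

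For the lower bound, take $j=\pi_1(n)$ (assuming $n\ge2$). Its block-1 profile goes from $0$ to $\beta$, so its variation is at least $\beta$. Therefore $\mu(A)=\Theta(\beta)$. One could save a bit by reversing the order in alternate blocks, but this is unnecessary.

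\emph{Discrepancy.} Discrepancy is homogeneous, so $\disc(A)=\beta\,\disc(P)$. The theorem of \cite{newman2012beck} gives $\disc(P)=\Omega(\log n)$, hence $\disc(A)=\Omega(\beta\log n)=\Omega(\mu(A)\log n)$.

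For the matching upper bound, apply Theorem~\ref{th:mountainDistUB}. Here $\min\{3n,n\}=n$ and $\|A\|_\infty=\max_{i,j}|A_{ij}|=\beta$, where $\|A\|_\infty$ is the column norm, i.e.\ the largest absolute entry. The theorem gives
\[
\disc(A)\le O(1)\,\mu(A)\log(2+n)+2\beta=O(\mu(A)\log n),
\]
using $\beta=\Theta(\mu(A))$ and $\log n\ge\log 3>0$ for $\ell\ge1$. Combining the two bounds gives $\disc(A)=\Theta(1)\cdot\mu(A)\log n$.

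The only real input is the Newman--Neiman--Nikolov lower bound, which we take as given. The point needing care is the choice of row order. Concatenating the three prefix chains keeps each column's variation constant, because every column is monotone within each block and changes by at most $\beta$ at the two block junctions. The rest is bookkeeping; the trivial case $\ell=0$ is excluded or absorbed into constants.
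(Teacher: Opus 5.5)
Your proof is correct and follows the paper's argument. You scale the Newman--Neiman--Nikolov matrix by $\beta$, bound each column's variation by a constant (monotone within each prefix block, plus the two block junctions), exhibit a column that switches inside a block to get $\mu(A)\ge\beta$, and take the upper bound from Theorem~\ref{th:mountainDistUB}, where you are more explicit than the paper about absorbing the $2\|A\|_\infty=2\beta$ term. Two small corrections: the case $\ell=0$ must be excluded rather than absorbed into constants, since for $n=1$ one has $\mu(A)=0$ while $\disc(A)=\beta$ (the paper likewise assumes $n\ge3$), and a column switches at most once inside a block, not exactly once (it never switches in block $a$ when $\pi_a(1)=j$).
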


\begin{proof}
Let \(P\in\{0,1\}^{3n\times n}\) be the matrix of three permutations from
the theorem above, so $\disc(P)=\Omega(\log n).$
Set $A:=\beta P.$
Then
\[
     \disc(A)=\beta\disc(P)=\Omega(\beta\log n).
\]

We claim that \(\mu(P)=\Theta(1)\). Indeed, in each permutation block,
every column changes from \(0\) to \(1\) at most once, and there are only
two boundaries between the three blocks. Hence every column has total
variation \(O(1)\), so \(\mu(P)=O(1)\). Since \(n\geq 3\), some column
changes inside a permutation block, and therefore \(\mu(P)\geq1\). Thus $\mu(A)=\beta\mu(P)=\Theta(\beta).$
Consequently, $\disc(A)=\Omega(1)\cdot \mu(A) \cdot  \log n.$
The reverse inequality follows from Theorem~\ref{th:mountainDistUB}.
\end{proof}
\section{Min-Cost Unsplittable Flow}
\label{sec:min-cost-unsplittable-flow}

In this section we give rounding algorithms for min-cost unsplittable flow.
We consider two settings: arbitrary directed multigraphs with admissible paths
using at most $D$ hops, and multidimensional grids and tori with admissible
paths having at most $r$ turns. In both settings, starting from a feasible
fractional LP solution, our algorithms choose paths integrally, preserve the LP
objective value, and violate each capacity constraint only by an additive error.

\subsection{Min-Cost Unsplittable Flow on general directed multigraphs}\label{subsec:min-cost-unsplittable-flowGeneral}
Let $G=(V,E)$ be a directed multigraph on $n$ vertices with arc capacities
$C_e \in \mathbb{R}_{\ge 0}$ for every $e \in E$. We are given a multiset of
$k$ requests
\[
    \{(s_i,t_i,\rho_i)\}_{i \in [k]},
\]
where request $i$ requires routing an unsplittable flow of value
$\rho_i \in \mathbb{R}_{>0}$ from $s_i$ to $t_i$. For every request
$i \in [k]$, let $\mathcal{P}_i$ be the set of admissible $s_i$--$t_i$ paths,
each of length at most $D$. We regard paths as tagged by their request, so the
sets $\mathcal{P}_i$ are disjoint.
Let
$
    \mathcal{P} := \bigcup_{i \in [k]} \mathcal{P}_i .
$

For a path $P \in \mathcal{P}_i$, let $c_P \in \mathbb{R}$ denote the cost of
routing request $i$ on $P$. We do not assume that the costs are nonnegative;
thus the objective can encode both minimization and maximization problems.

The min-cost unsplittable-flow problem is to choose one path
$P_i\in\mathcal{P}_i$ for every request $i\in[k]$ so as to minimize
$\langle c,y\rangle$, subject to
\[
    L_e(y)
    :=
    \sum_{i=1}^k
    \sum_{\substack{P\in\mathcal{P}_i\\ e\in P}}
        \rho_i y_P
    \le C_e
    \qquad \forall e\in E,
\]
where $y_P\in\{0,1\}$ and
\[
    \sum_{P\in\mathcal{P}_i} y_P=1
    \qquad \forall i\in[k].
\]

\subsubsection{Related work}
We discuss related work in two settings: congestion-only rounding and
min-cost rounding. We call a result \emph{congestion-only} if it guarantees an
integral routing with bounded edge loads $L_e(y)$ but does not provide a
guarantee of the form $\langle c,y\rangle\le \langle c,x\rangle$ for a linear cost vector $c$. Let $\rho_{\max}:=\max_{i \in [k]} \rho_i$.
\paragraph{Congestion-only results.} 
Classical randomized rounding~\cite{raghavan1987randomized} and dependent
rounding~\cite{gandhi2006dependent} imply, for unit demands and
$C_e\gtrsim \log |E(G)|$,
$L_e(y)\le C_e+O(\sqrt{C_e\log |E(G)|})$. Leighton, Rao, and
Srinivasan~\cite{leighton1998multicommodity} used the Lovász Local Lemma as a
rounding tool for multicommodity routing. This approach was refined by Harris
and Srinivasan~\cite{harris2013moser}, who obtained
$L_e(y)\le C_e+O(\sqrt{C_e\log D})$ for $C_e\ge \log D$, and
$
    L_e(y)
    \le
    O\!\left(
        \frac{\log D}{1+\log(\log D/C_e)}
    \right)
$
for smaller $C_e$. Bansal and Nagarajan~\cite{bansal2017approximation} gave the
bound $L_e(y)
    \le
    C_e+
    \min\{
        O(\sqrt{D}\log n),
        O(\sqrt{C_e\log n}+\log^2 n)
    \}.$
For special graph classes, constant-additive congestion bounds of the form
\(L_e(y)\le C_e+O(\rho_{\max})\) are known. In particular, this type of
guarantee was obtained and refined for ring loading
\cite{schrijver1999ring,skutella2016note,daeubel2022improved} and for
outerplanar graphs \cite{aleman2025unsplittable,aleman2026pinning}.

\paragraph{Min-cost results.}
Fewer results also preserve a cost objective. Our result applies
to arbitrary directed multigraphs with admissible paths using at most $D$ hops
and arbitrary path costs on the LP support. It gives
$$
    L_e(y)
    \le
    C_e+
    O\!\left(\rho_{\max}\sqrt D\right)
    ,
    \qquad
    \langle c,y\rangle\le \langle c,x\rangle .
$$

For single-source unsplittable min-cost flow in general directed graphs,
Skutella~\cite{skutella2002approximating} gives an objective-preserving
rounding with $L_e(y)\le 2C_e+\rho_{\max}.$
If the demands are ordered by divisibility, this improves to $L_e(y)\le C_e+\rho_{\max}.$
For planar single-source instances with arbitrary demands, Traub, Vargas Koch,
and Zenklusen~\cite{traub2024single} prove the objective-preserving bound $L_e(y) \le C_e+2 \rho_{\max}.$
For further objective-preserving results in structured settings, see
\cite{majthoub2025integer}.
\subsubsection{Rounding algorithm}
The min-cost unsplittable-flow problem is the following integer program:
\begin{align}
    \min \quad
        & \langle c,x\rangle
        \label{ip:objective} \\
    \text{s.t.} \quad
        & \sum_{P \in \mathcal{P}_i} x_P = 1
        && \forall i \in [k],
        \label{ip:assignment} \\
        & \sum_{i=1}^k
          \sum_{\substack{P \in \mathcal{P}_i \\ e \in P}}
            \rho_i x_P \le C_e
        && \forall e \in E,
        \label{ip:capacity} \\
        & x_P \in \{0,1\}
        && \forall P \in \mathcal{P}.
        \label{ip:integrality}
\end{align}

The standard LP relaxation is obtained by replacing
\eqref{ip:integrality} with nonnegativity constraints:
\begin{align}
        & x_P \ge 0
        && \forall P \in \mathcal{P}.
        \label{lp:nonnegative}
\end{align}

For a feasible LP solution $x$, let
\[
    \mathcal{P}'_i:=\{P\in\mathcal{P}_i:x_P>0\},
    \qquad
    \mathcal{P}':=\bigcup_{i\in[k]}\mathcal{P}'_i .
\]
For $z\in\mathbb{R}^{\mathcal{P}'}$, write
\[
    L_e(z):=
    \sum_{i=1}^k
    \sum_{\substack{P\in\mathcal{P}'_i\\ e\in P}}
        \rho_i z_P .
\]
Also define 
\[
    \rho_{\max}^e
    :=
    \max\{\rho_i : i\in[k],\; \exists P\in\mathcal{P}'_i
    \text{ with } e\in P\},
\]
with $\rho_{\max}^e:=0$ if the set is empty.

We use the following standard facts about solving the LP relaxation. If the path
family \(\mathcal P\) is given explicitly, then the LP has \(|\mathcal P|\)
variables and \(k+|E(G)|\) constraints, and can be solved in time polynomial in
\(|\mathcal P|\), \(|E(G)|\), and \(k\). In this explicit case, arbitrary path
costs \(c_P\in\mathbb R\) are allowed.

For implicit path families, we use the standard pricing-oracle approach; see,
e.g., \cite{chakrabarti2007approximation,fan2024multi}. This gives polynomial
time solvability whenever the relevant minimum-cost path problem over
\(\mathcal P_i\) can be solved efficiently. In particular, for edge-additive
nonnegative costs, this applies to the family of all \(s_i\)--\(t_i\) paths and
to the family of all such paths with at most \(D\) hops, the latter via the
standard layered-graph construction. If \(G\) is a DAG, the same conclusion
holds for arbitrary real edge-additive costs.

Finally, request-independent per-unit edge costs are captured by setting
\[
    \mathrm{cost}(e,i):=\rho_i\mathrm{cost}(e),
    \qquad
    c_P:=\rho_i\sum_{e\in P}\mathrm{cost}(e).
\]

The proof of the next rounding theorem applies
Theorem~\ref{th:discHardConstraints}.

\begin{theorem}[General directed multigraphs with $D$-hop paths]
\label{thm:min-cost-unsplittable-rounding}
Assume that $G$ is an arbitrary directed multigraph and every admissible path
$P\in\mathcal{P}_i$ uses at most $D$ hops. Let $x$ be a feasible solution to
the LP relaxation \eqref{ip:objective}--\eqref{lp:nonnegative}. Then one can
find, in polynomial time, a vector $y\in\{0,1\}^{\mathcal{P}'}$ such that
\begin{align}
    &\sum_{P\in\mathcal{P}'_i} y_P = 1
    && \forall i\in[k],
    \label{eq:rounding-assignment} \\
    &L_e(y)
    \le
    L_e(x)
    +
    O\!\left(\rho_{\max}^e\sqrt D\right)
    && \forall e\in E,
    \label{eq:rounding-capacity} \\
    &\langle c,y-x\rangle \le 0 .
    \label{eq:rounding-cost}
\end{align}
\end{theorem}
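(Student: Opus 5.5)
The plan is to apply Theorem~\ref{th:discHardConstraints} to the LP restricted to the support $\mathcal P'$, using the assignment constraints \eqref{ip:assignment} as the hard constraints. The soft constraints will be the edge loads, rescaled row by row so that the per-edge bound $\rho^e_{\max}$ appears. The discrepancy oracle will be the constructive Koml\'os algorithm of Guo, Fang and Lu~\cite{guo2026vector}.

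\textbf{Setup.} Restrict $x$ to the coordinates in $\mathcal P'$; its support is polynomial because $x$ is given. Since $\sum_{P\in\mathcal P_i}x_P=1$ and $x\ge0$, we have $x\in[0,1]^{\mathcal P'}$.

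Let $H\in\{0,1\}^{k\times\mathcal P'}$ be the request-incidence matrix, with $H_{i,P}=1$ iff $P\in\mathcal P'_i$. The sets $\mathcal P'_i$ are disjoint, so every column of $H$ has exactly one $1$, and $Hx=\mathbf 1\in\mathbb Z^k$.

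Let $E'$ be the set of edges with $\rho^e_{\max}>0$; the edges with $\rho^e_{\max}=0$ carry zero load under every $z$ and can be ignored. Define $A\in\mathbb R^{E'\times\mathcal P'}$ by
\[
A_{e,P}:=\frac{\rho_i}{\rho^e_{\max}}\,\mathbf 1[e\in P]\qquad (P\in\mathcal P'_i).
\]
Then $(Az)_e=L_e(z)/\rho^e_{\max}$. Every entry of $A$ lies in $[0,1]$, and each column has at most $D$ nonzero entries because paths have at most $D$ hops. Hence $\|A_{\ast P}\|_2\le\sqrt D$ and $\|A\|_\infty\le1$.

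\textbf{Bounding $\beta$.} Every $Q\in\mathcal Q(A)$ has columns of the form $A_{\ast P'}-A_{\ast P''}$. Each such column has at most $2D$ nonzero entries, each of absolute value at most $1$, so its $\ell_2$-norm is at most $\sqrt{2D}$.

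Scale $Q$ by $1/\sqrt{2D}$ and run the deterministic polynomial-time Koml\'os algorithm of~\cite{guo2026vector}. This returns a coloring $\chi$ with $\|Q\chi\|_\infty\le O(\sqrt D)$. So the oracle hypothesis of Theorem~\ref{th:discHardConstraints} holds with $\beta=O(\sqrt D)$, and each oracle call runs in polynomial time.

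\textbf{Rounding.} Theorem~\ref{th:discHardConstraints}, applied with the path costs $c$, makes $O(\log|\mathcal P'|)$ oracle calls plus polynomial work. It returns $y\in\{0,1\}^{\mathcal P'}$ with the following properties:
\begin{itemize}
\item $Hy=\mathbf 1$, which is \eqref{eq:rounding-assignment};
\item $\langle c,y-x\rangle\le0$, which is \eqref{eq:rounding-cost};
\item $\|A(x-y)\|_\infty\le O(\sqrt D)+\tfrac{2}{|\mathcal P'|}\|A\|_\infty\le O(\sqrt D)+2=O(\sqrt D)$.
\end{itemize}
Multiplying row $e$ back by $\rho^e_{\max}$ gives
\[
|L_e(y)-L_e(x)|\le O(\rho^e_{\max}\sqrt D)\qquad\text{for all } e\in E',
\]
which implies \eqref{eq:rounding-capacity}. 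For $e\notin E'$ both loads are $0$.

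\textbf{Main obstacle.} The only delicate point is getting the per-edge quantity $\rho^e_{\max}$ rather than the global $\rho_{\max}$. The row rescaling above handles this: it keeps every entry bounded by $1$ while leaving the sparsity pattern of the columns unchanged. The Koml\'os bound therefore still depends only on the hop bound $D$, and the additive $\tfrac2n\|A\|_\infty$ term stays $O(1)$.
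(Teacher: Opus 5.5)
Your proposal is correct and follows the paper's proof essentially step for step. In both, you divide each row of the weighted edge--path matrix by $\rho^e_{\max}$, bound every column of every $Q\in\mathcal Q$ by $\sqrt{2D}$ in $\ell_2$, use the Guo--Fang--Lu Koml\'os algorithm as the oracle, and apply Theorem~\ref{th:discHardConstraints} with the request-incidence rows as $H$. The only differences are cosmetic: the paper scales by the least power of two $\tau$ with $\tau^2\ge 2D$, which keeps the oracle input rational instead of dividing by the possibly irrational $\sqrt{2D}$, and it handles the trivial case $D=0$ separately, where your scaling by $1/\sqrt{2D}$ is undefined.
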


The required discrepancy oracle is provided by the following
result of Guo, Fang, and Lu.

\begin{theorem}[Guo--Fang--Lu~\cite{guo2026vector}]
\label{th:KomlosByGuo}
Let $M \in \mathbb{R}^{m \times n}$ satisfy $\|M\|_2 \le 1$.
Then $\disc(M) \le O(1),$
and the corresponding coloring can be found in polynomial time.
\end{theorem}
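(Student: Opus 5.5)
The plan is to prove the Komlós bound by a discrete-time, self-guided random walk of Lovett--Meka / Bansal--Dadush--Garg--Lovett type. I will strengthen it with a potential that controls the \emph{total variation} of each row's discrepancy trajectory, not just its final value. This is the ``directional total variation'' idea cited in the introduction; I would rebuild it along these lines rather than use it as a black box.

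We maintain a fractional coloring \(x_t\in[-1,1]^n\), starting at \(x_0=0\). A coordinate is \emph{frozen} once \(|x_{t,j}|\ge 1-\delta\), and \(V_t\) denotes the alive coordinates. At each step we take \(x_{t+1}=x_t+\gamma\, \Sigma_t^{1/2}g_t\), where \(g_t\) is Gaussian and \(\Sigma_t\) is a covariance matrix supported on \(V_t\). A final greedy/linear-algebra cleanup makes the \(x_{t}\) integral, as in the proof of Theorem~\ref{th:discHardConstraints}. The covariance is chosen by an SDP: we require
\[
\Sigma_t\succeq0,\qquad \Sigma_{jj}\ge \tfrac12 \ (j\in V_t),\qquad \la a_ia_i^{\top},\Sigma_t\ra\le w_{i,t}\,\|a_i|_{V_t}\|_2^2\quad\forall i,
\]
and \(\Sigma_t\) must be orthogonal to a small set of ``blocked'' rows. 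The weights \(w_{i,t}\) are small for rows whose current discrepancy \(\la a_i,x_t\ra\) is large. Feasibility with \(\sum_i w_{i,t}\|a_i|_{V_t}\|^2\) comparable to \(|V_t|\) follows from the budget \(\sum_i\|a_i|_{V_t}\|_2^2\le |V_t|\), which is exactly where \(\|M\|_2\le1\) enters. By Markov, at most \(|V_t|/4\) rows have \(\|a_i|_{V_t}\|_2^2\ge 4/|V_t|\)-scale heavy mass; those are blocked outright. The light rows are handled by a duality/averaging argument in the spirit of Nikolov's vector-Komlós SDP bound.

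Next, I define a potential
\[
\Phi_t=\sum_i \exp\!\Big(\lambda\,\mathrm{TV}_i(t)-\kappa\,\|a_i|_{V_t}\|_2^2\Big),
\]
where \(\mathrm{TV}_i(t)\) is a smoothed total variation of the path \(s\mapsto\la a_i,x_s\ra\), for instance accumulated quadratic variation with a drift correction. The \(-\kappa\|a_i|_{V_t}\|^2\) term rewards rows as their alive mass is consumed. Because the variance injected into row \(i\) is at most \(w_{i,t}\|a_i|_{V_t}\|^2\), while freezing removes alive mass from row \(i\) at a comparable rate in expectation, one should get \(\mathbb E[\Phi_{t+1}-\Phi_t]\le0\) for suitable constants \(\lambda,\kappa\). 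This supermartingale inequality is the heart of the argument. It implies that with constant probability every row ends with \(\mathrm{TV}_i=O(1)\), hence \(|\la a_i,x_{\mathrm{end}}\ra|=O(1)\). That gives \(\disc(M)=O(1)\) with no \(\sqrt{\log n}\) factor, because the potential is charged to each row's own \(\ell_2\) mass rather than to a union bound over rows.

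Polynomial time follows since each step solves an SDP, and \(O(\gamma^{-2}\log n)\) steps freeze all coordinates with high probability, by the \(\Sigma_{jj}\ge\tfrac12\) energy-increment argument. The resulting algorithm is randomized polynomial time. The statement asks for a deterministic algorithm (Guo--Fang--Lu's claim), so I would then derandomize by replacing the Gaussian step with the direction, within the SDP's eigenspace, that minimizes the second-order expansion of \(\Phi\). This is the standard multiplicative-weights derandomization; it works because \(\Phi\) is smooth and the expected change is nonpositive.

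The main obstacle is the supermartingale step. The classical obstruction is that a row with many tiny entries accumulates variance at every step while shedding mass only slowly. The drift condition must therefore couple the SDP weights \(w_{i,t}\) to the freezing rate of \emph{that row's own} coordinates. My first attempt would add to the SDP the constraint \(\la a_ia_i^{\top},\Sigma\ra\le \sum_{j}a_{ij}^2\Sigma_{jj}\cdot w_{i,t}\). Since \(\Sigma_{jj}\) governs how fast \(x_j\) approaches \(\pm1\), this ties each row's variance to its own mass-loss rate coordinatewise. The remaining work is then to prove that this strengthened SDP stays feasible under only the \(\ell_2\)-column budget.
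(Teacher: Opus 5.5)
The paper does not prove this statement; it imports it from Guo--Fang--Lu and uses it only as a discrepancy oracle. So the question is whether your sketch stands on its own, and it does not. The step you call the heart of the argument, $\mathbb{E}[\Phi_{t+1}-\Phi_t]\le 0$, is only asserted (``one should get''). You also end by leaving open the feasibility of the SDP that is supposed to make it true. That is the entire content of an $O(1)$ Komlós bound; the Bansal--Dadush--Garg and Gram--Schmidt walks reach only $O(\sqrt{\log n})$.

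The heuristic you give for the drift also fails as written. With hard freezing, row $i$ loses alive mass only when one of its coordinates reaches $1-\delta$. A row whose coordinates sit near $0$ therefore sheds nothing for many steps while gaining $\gamma^2 a_i^{\top}\Sigma_t a_i$ per step, and $\Phi$ goes up. Replacing hard mass by a soft mass such as $\sum_j a_{ij}^2(1-x_{t,j}^2)$ and imposing $a_i^{\top}\Sigma a_i\le w\sum_j a_{ij}^2\Sigma_{jj}$ does repair the first-order drift. But for constant $w$ this constraint is already met by $\Sigma=I_{V_t}$ with $w=1$. It only certifies that each row's quadratic variation is $O(1)$, i.e.\ that rows are subgaussian, which existing walks already achieve.

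Even granting the supermartingale, your conclusion does not follow. $\Phi$ is a sum of exponentials over rows, which is a union bound, not a way around one. Since $\mathrm{TV}_i(0)=0$ and $e^{-u}\ge 1-u$, you have $\Phi_0\ge m-\kappa n$. So the Markov event $\Phi_{\mathrm{end}}\le 2\Phi_0$ only yields $\max_i\mathrm{TV}_i\le\lambda^{-1}\ln(2m)$.

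Nor can $\mathrm{TV}_i$ do both jobs you assign it:
\begin{itemize}
\item Genuine total variation of $s\mapsto\la a_i,x_s\ra$ does bound the endpoint. But for a Gaussian walk with step $\gamma$ it is of order $\gamma^{-1}$ for every row that receives variance, never $O(1)$.
\item Quadratic variation can be $O(1)$ for all rows at once without bounding the endpoints. Take an $n\times n$ matrix with entries $\pm n^{-1/2}$. The identity-covariance walk gives every row quadratic variation about $1$, yet it outputs a uniformly random coloring, whose discrepancy on a typical such matrix is of order $\sqrt{\log n}$.
\end{itemize}
The missing ingredient is a mechanism that beats the union bound across rows, and nothing in the proposal supplies one.

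Smaller issues:
\begin{itemize}
\item Rows with $\|a_i|_{V_t}\|_2^2\ge 4/|V_t|$ can number up to $|V_t|^2/4$. Blocking at most $|V_t|/4$ rows needs a constant threshold ($\ge 4$).
\item The final cleanup cannot be taken from the proof of Theorem~\ref{th:discHardConstraints}, whose rounding stage calls a discrepancy oracle for exactly such matrices. Rounding frozen coordinates directly with $\delta\le 1/n$ suffices, since row $\ell_1$-norms can be as large as $n$.
\item The derandomization inherits the unproved drift bound.
\end{itemize}
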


\begin{proof}[Proof of Theorem~\ref{thm:min-cost-unsplittable-rounding}]
Let $p:=|\mathcal{P}'|$ and index the coordinates by paths
$P\in\mathcal{P}'$. Define
$A \in \mathbb{R}_{\ge 0}^{E \times \mathcal{P}'}$ by $ A_{eP}:=\rho_i\mathbf{1}[e\in P]
    \text{ for } P\in\mathcal{P}'_i .$
Thus $L_e(z)=\langle A_{e\ast},z\rangle$ for every
$z\in\mathbb{R}^{\mathcal{P}'}$.

For every $e\in E(G)$, set
\[
    B_{e\ast}:=
    \begin{cases}
        A_{e\ast}/\rho_{\max}^e, & \rho_{\max}^e>0,\\
        0, & \rho_{\max}^e=0 .
    \end{cases}
\]
Then $\|B\|_\infty\le 1$.

Let $H\in\{0,1\}^{[k]\times\mathcal{P}'}$ be the request-path incidence matrix, $H_{iP}:=\mathbf{1}[P\in\mathcal{P}'_i].$
Since $x$ is feasible, $Hx=\mathbf{1}$.
If \(D=0\), choosing a cheapest supported path for each request
proves the claim, so assume \(D\ge1\).
Fix \(Q\in\mathcal Q(B)\). Each column of \(Q\) has at most
\(2D\) nonzero entries, all in \([-1,1]\), so \(\|Q\|_2\le\sqrt{2D}\).
Let \(\tau\) be the least power of two with \(\tau^2\ge2D\).
Then \(\|Q/\tau\|_2\le1\) and \(\tau=O(\sqrt D)\).
Theorem~\ref{th:KomlosByGuo}, applied to \(Q/\tau\), therefore
provides a discrepancy oracle with
$$\beta=C\tau=O(\sqrt D).$$

Applying Theorem~\ref{th:discHardConstraints} to \(B,H,c,x\),
we obtain \(y\in\{0,1\}^{\mathcal P'}\) such that
\[
    Hy=\mathbf1,
    \qquad
    \langle c,y-x\rangle\le0,
    \qquad
    \|B(x-y)\|_\infty
    \le \beta+\frac2p\|B\|_\infty
    =O(\sqrt D),
\]
where we used \(\|B\|_\infty\le1\).
Multiplying the bound for row \(e\) by \(\rho_{\max}^e\) gives
\[
    L_e(y)\le L_e(x)+O\!\left(\rho_{\max}^e\sqrt D\right).
\]
Finally, \(Hy=\mathbf1\) selects exactly one path per request.
\end{proof}

\subsection{Min-Cost Bounded-Turn Unsplittable Flow in Grids and Tori}\label{subsec:min-cost-rturn-chip-wiring}

In this subsection, $G$ is the $d$-dimensional undirected grid on vertex set
$[n]^d$, or the corresponding $d$-dimensional torus. For every request
$i \in [k]$, the admissible family $\mathcal{P}_i$ consists of
$s_i$--$t_i$ paths with at most $r$ turns, i.e. each path is a concatenation of
at most $r+1$ straight coordinate segments. 

\subsubsection{Related work}
In the VLSI-routing literature, one typically assumes unit demands
$\rho_i=1$ for all $i\in[k]$ and disjoint source-sink pairs, and studies the
maximum edge congestion $\max_e L_e(y)$. Grid routing is a classical abstraction
of VLSI global routing~\cite{hu2001survey,kahng2011vlsi}.

Karp, Leighton, Rivest, Thompson, Vazirani, and Vazirani
\cite{karp1987global}  proved that there are instances on the $n\times n$ grid for which every
one-turn routing has congestion at least $\left\lfloor n/2 \right\rfloor+1,$
and gave a polynomial-time algorithm producing a one-turn routing with $L_e(y)\le \left\lfloor n/2 \right\rfloor+2$.

Bhatia, Leighton, Makedon, and Norton~\cite{bhatia1992improved}  gave,
for terminal pairs at grid distance at most \(D\), a fast two-bend routing
algorithm with
$L_e(y)\le D+1$, which is within additive $1$ of a natural
worst-case lower bound. For one-bend routing, they  obtained a polynomial-time routing with $L_e(y)\le C_e+O(\log^2 D).$

Hajiaghayi, Kleinberg, and Leighton~\cite{hajiaghayi2006semi}  constructed an
unsplittable-flow instance on the $n\times n$ grid with fractional congestion
$1$  such that any routing using fewer than $r$ turns has congestion $\Omega\!\left(n^{1/(2r+1)}\right).$
We refer to \cite{zuzic2023sparse} for a review of results on online versions of the problem.

\subsubsection{Rounding algorithm}
Our result gives a cost-preserving LP-rounding guarantee for bounded-turn
routing in $d$-dimensional grids and tori for an arbitrary multiset of requests $\{(s_i,t_i,\rho_i)\}_{i \in [k]}$.
\begin{theorem}[$r$-turn paths in grids and tori]
\label{thm:min-cost-rturn-grid-torus-rounding}
Assume that $G$ is the $d$-dimensional undirected grid or torus on vertex set
$[n]^d$, and every admissible path $P\in\mathcal{P}_i$ has at most $r$ turns.
Let $x$ be a feasible solution to the LP relaxation
\eqref{ip:objective}--\eqref{lp:nonnegative}. Then one can find, in polynomial
time, a vector $y\in\{0,1\}^{\mathcal{P}'}$ such that
\begin{align}
    &\sum_{P\in\mathcal{P}'_i} y_P = 1
    && \forall i\in[k],
    \label{eq:rturn-rounding-assignment} \\
    &L_e(y)
    \le
    L_e(x)
    +
    O\!\left(
        \rho_{\max}^e (r+1)\log |V(G)|
    \right)
    && \forall e\in E,
    \label{eq:rturn-rounding-capacity} \\
    &\langle c,y-x\rangle \le 0 .
    \label{eq:rturn-rounding-cost}
\end{align}
\end{theorem}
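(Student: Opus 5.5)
The plan is to follow the proof of Theorem~\ref{thm:min-cost-unsplittable-rounding}: apply Theorem~\ref{th:discHardConstraints} with the request--path incidence matrix $H$ (so $Hx=\mathbf 1$), and supply the oracle via the mountain-distance bound of Theorem~\ref{th:mountainDistUB} instead of Koml\'os. The difficulty is the per-edge quantity $\rho^e_{\max}$. Normalizing row $e$ by $\rho^e_{\max}$, as before, gives entries $\rho_i/\rho^e_{\max}$ that vary along a path, and this could ruin the mountain distance. I would therefore \emph{split each edge row by demand scale}.

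\textbf{Step 1 (dyadic splitting).} For $s\in\mathbb Z$, let request $i$ have class $s$ if $\rho_i\in(2^{s-1},2^s]$. Define a matrix $R$ with rows indexed by pairs $(e,s)$, namely
\[
R_{(e,s),P}:=\frac{\rho_i}{2^s}\mathbf 1[e\in P]
\]
for $P\in\mathcal P'_i$ with $i$ of class $s$, and $0$ otherwise. Then $A_{e\ast}=\sum_s 2^sR_{(e,s)\ast}$, and $R_{(e,s)\ast}\neq0$ only if $2^{s-1}<\rho^e_{\max}$. Suppose we obtain $y$ with $\|R(x-y)\|_\infty\le\beta'$. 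Then the geometric sum gives
\[
|L_e(x)-L_e(y)|\le\sum_{s:\,2^{s-1}<\rho^e_{\max}}2^s\beta'\le4\rho^e_{\max}\beta'.
\]
This yields exactly the per-edge bound, provided $\beta'=O((r+1)\log|V(G)|)$. So I apply Theorem~\ref{th:discHardConstraints} to $(R,H,c,x)$. Since $\|R\|_\infty\le1$, the additive term $\tfrac{2}{p}\|R\|_\infty$ is harmless.

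\textbf{Step 2 (oracle via block structure).} The key observation is that every $Q\in\mathcal Q(R)$ has columns $R_{\ast P'}-R_{\ast P''}$ for two paths of the \emph{same} request, as dictated by the matching construction in Theorem~\ref{th:discHardConstraints}, so both paths lie in a single class $s$. Hence $Q$ is block diagonal, with one row block per class and at most $|E|$ rows per block. I would colour each block independently with Theorem~\ref{th:mountainDistUB}; the union of these colourings has the same $\ell_\infty$ error. Each block gets the bound $O(\mu)\log(2+|E|)+2\|Q\|_\infty$, and $|E|\le d|V(G)|$ with $d\le\log_2|V(G)|$ when $n\ge2$, so $\log|E|=O(\log|V(G)|)$. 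This per-block treatment is what avoids any dependence on the number of demand classes, which could be as large as $k$. Strictly, I need the oracle for matchings arising in the proof, which pair variables within one $N_\ell$, that is, within one request. If one insists on all of $\mathcal Q(R)$, I would note that Theorem~\ref{th:discHardConstraints}'s proof only queries such same-request matchings.

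\textbf{Step 3 (row order and mountain distance).} Within each block, order the edges line by line. Each axis-parallel line of the grid (respectively, each cycle of the torus, cut at an arbitrary point) contributes a consecutive run of edges in their natural order. A path with at most $r$ turns is a union of at most $r+1$ straight segments. Each segment is an interval of one line, or at most two intervals if it wraps around a cut in the torus. Hence each column of a block is a constant $\rho_i/2^s\in(1/2,1]$ on at most $2(r+1)$ intervals and zero elsewhere, so its variation is at most $4(r+1)$. A difference of two such columns has variation at most $8(r+1)$, giving $\mu(Q)=O(r+1)$ and $\|Q\|_\infty\le1$. Theorem~\ref{th:mountainDistUB} then gives $\beta'=O((r+1)\log|V(G)|)$, found in polynomial time. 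The number of rows of $R$ is polynomial in the input, since only classes of actual demands occur.

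I expect the main obstacle to be Step~2: making sure the logarithm depends on $|E|$ rather than on the total number of rows $|E|\cdot\#\text{classes}$, and that the same-request pairing genuinely makes $Q$ block diagonal. If that fails, my fallback would be to merge all classes whose scale $2^s$ is below $\rho^e_{\max}/|V|^{2}$ and bound their contribution crudely.
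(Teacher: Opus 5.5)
Your proposal is correct and is essentially the paper's argument. Both use dyadic bucketing of the demands, an edge order by direction, line and position in which each path column is at most $r+1$ intervals (at most $2(r+1)$ on the torus), Theorem~\ref{th:mountainDistUB} as the discrepancy oracle, and a geometric sum $\sum_s 2^s = O(\rho^e_{\max})$ over the classes that touch $e$. The only difference is packaging: the paper runs Theorem~\ref{th:discHardConstraints} separately on each class and concatenates the results, so it can use that theorem exactly as stated; your single stacked matrix $R$ instead needs your (valid) remark that the proof only queries same-request matchings, which makes every queried $Q$ block diagonal with $|E|$ rows per block.
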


\begin{proof}
We prove the statement for grids. 
Order the edges by coordinate direction, then by coordinate line, and then by
position on that line.

First assume that $\rho_i=1$ for all $i\in[k]$. Let
$p:=|\mathcal{P}'|$, and define the edge-path matrix
$A\in\{0,1\}^{E\times\mathcal{P}'}$ by
$
    A_{eP}:=\mathbf{1}[e\in P].
$
Let $H\in\{0,1\}^{[k]\times\mathcal{P}'}$ be the request-path incidence matrix,
$
    H_{iP}:=\mathbf{1}[P\in\mathcal{P}'_i].
$
Since $x$ is feasible, $Hx=\mathbf{1}$.

We bound the discrepancy term in Theorem~\ref{th:discHardConstraints}. Fix
$Q\in\mathcal{Q}(A)$. Each column of $Q$ is of the form
$A_{\ast P}-A_{\ast P'}$. Since every admissible path has at most $r$ turns, it
is the union of at most $r+1$ straight coordinate segments, and each such
segment is an interval in the chosen edge ordering. Hence each path column has
mountain distance at most $2(r+1)$, and therefore
\[
    \mu(Q)\le 4(r+1),
    \qquad
    \|Q\|_\infty\le 1 .
\]
Moreover, by the definition of $\mathcal{Q}(A)$, every such $Q$ has at most
$|E(G)|$ columns. Applying Theorem~\ref{th:mountainDistUB} to all column
submatrices of $Q$ gives
$$
    \disc(Q)\le O((r+1)\log |E(G)|).
$$

Applying Theorem~\ref{th:discHardConstraints} to $A,H,c,x$, we obtain
$y\in\{0,1\}^{\mathcal{P}'}$ such that
\[
    Hy=\mathbf{1},
    \qquad
    \langle c,y-x\rangle\le 0, \qquad \|A(y-x)\|_\infty
    \le O((r+1)\log |E(G)|).
\]
This proves the theorem when all demands are equal to $1$.

We now handle arbitrary demands by dyadic bucketing. For every integer $\ell$,
let $I_\ell:=\{i\in[k]:2^{\ell-1}<\rho_i\le 2^\ell\}.$
For each nonempty class $I_\ell$, let
\[
    \mathcal{P}'_\ell:=\bigcup_{i\in I_\ell}\mathcal{P}'_i,
    \qquad
    A^\ell_{eP}:=\rho_i\mathbf{1}[e\in P]
    \quad
    \text{for } P\in\mathcal{P}'_i,\; i\in I_\ell .
\]
Apply the unit-demand argument to the scaled matrix
$\overline{A}^\ell:=A^\ell/ 2^{\ell}$. Since
$\rho_i/ 2^{\ell}\in(1/2,1]$ for $i\in I_\ell$, the same mountain-distance
bound gives
\[
    \max_{Q\in\mathcal{Q}(\overline{A}^\ell)}\disc(Q)
    \le O((r+1)\log |E(G)|).
\]
Thus Theorem~\ref{th:discHardConstraints}, applied to
$\overline{A}^\ell,H^\ell,c^\ell,x^\ell$, gives an integral vector
$y^\ell$ selecting one path for each request in $I_\ell$, with
\[
    \langle c^\ell,y^\ell-x^\ell\rangle\le 0, \qquad \|A^\ell(y^\ell-x^\ell)\|_\infty
    \le
    O((r+1) 2^{\ell}\log |E(G)|).
\]

Let $y$ be the concatenation of the vectors $y^\ell$. Then one path is selected
for every request, and summing the objective guarantees yields $\langle c,y-x\rangle\le 0.$
For a fixed edge $e$, only classes containing a request with a supported path
using $e$ can affect its load. 

Therefore
\[
    L_e(y)-L_e(x)
    \le
    \sum_{\ell:\, I_\ell \text{ contributes to } e}
        O((r+1) 2^{\ell}\log |E(G)|)
    \le
    O((r+1)\rho_{\max}^e\log |E(G)|).
\]
Since $\log |E(G)|=O(\log (dn^d))=O(\log |V(G)|)$, the claim follows. 

The proof for tori is the same: after fixing
one cut on each coordinate cycle, every straight torus segment is the union of
at most two intervals in the induced ordering, changing only the absolute
constant.
\end{proof}

\section{Rainbow prefix  hypergraph decomposition}\label{sec:RainbowPrefix}

The rainbow prefix hypergraph decomposition problem is defined as follows.
Let \(\mathcal F\) be an \(N\)-regular multihypergraph whose edges are colored
with \(n\) colors, $\mathcal F=\mathcal C_1\sqcup\cdots\sqcup\mathcal C_n,$
where each color class contains exactly \(N\) edges. The goal is to find a
chain $ \varnothing=\mathcal F_0
    \subset \mathcal F_1
    \subset \cdots
    \subset \mathcal F_N=\mathcal F$
such that
\[
    d_{\mathcal F_i}(v)\in[i-\beta,i+\beta]
    \qquad \forall i\in[N],\ \forall v,
\]
and each difference of consecutive prefixes is rainbow:
\[
    |\mathcal C_j\cap(\mathcal F_i\setminus \mathcal F_{i-1})|=1
    \qquad \forall i \in [N],j\in[n].
\]

Note that, when $\beta=0$, the rainbow prefix hypergraph decomposition problem is precisely the problem of decomposing $\mathcal F$ into rainbow perfect matchings.

\subsection{Related work}

The decomposition of graphs and hypergraphs into prescribed substructures is a
central theme in extremal and probabilistic combinatorics; for modern surveys,
see Keevash~\cite{keevash2018hypergraph} and Glock, K{\"u}hn and
Osthus~\cite{glock2021extremal}. A foundational exact decomposition result is
Baranyai's theorem
\cite{baranyai1974factrization,baranyai1979edge}, which states that if
\(k\mid m\), then the complete \(k\)-uniform hypergraph
\(\binom{[m]}{k}\) decomposes into perfect matchings. Approximate decomposition
and near-perfect matching results go back to the nibble method of
R{\"o}dl~\cite{rodl1985packing} and to the theorem of Pippenger and
Spencer~\cite{pippenger1989asymptotic}. More recently, Keevash proved the
existence of designs~\cite{keevash2016existence}, and Glock, K{\"u}hn, Lo and
Osthus gave an iterative-absorption proof of the existence of hypergraph
\(F\)-designs for arbitrary \(F\)~\cite{glock2023existence}.

Rainbow one-factors in complete hypergraphs were
studied by El-Zanati, Plantholt, Sissokho and Spence
\cite{el2007existence}. They proved that if \(\binom{[mk]}{k}\) is given
together with a one-factorization, then there exists a one-factor whose edges
come from distinct factors. Li and Xu~\cite{li2007existence} generalized this
by showing that a rainbow perfect matching exists in any proper coloring of
\(\binom{[mk]}{k}\).
 Aharoni and
Berger~\cite{aharoni2009rainbow} studied rainbow matchings in \(r\)-partite
\(r\)-uniform hypergraphs, where one seeks a matching using at most one edge
from each color class. Keevash and Yepremyan~\cite{keevash2018rainbow}
studied rainbow matchings in properly colored multigraphs. Gao, Ramadurai,
Wanless and Wormald~\cite{gao2021full} studied full rainbow matchings in graphs. Coulson, Keevash, Perarnau and Yepremyan~\cite{coulson2020rainbow}
proved rainbow clique-factors results for graphs with suitably bounded colorings.

Finally, the prefix condition connects the problem to prefix discrepancy.
Classical discrepancy controls only the final imbalance of a coloring, while
prefix discrepancy asks for all initial segments to remain balanced. Bansal,
Jiang, Meka, Singla and Sinha \cite{bansal2022prefix} studied
prefix discrepancy and related vector-balancing problems. Bansal, Rohwedder and
Svensson \cite{bansal2022flow} introduced a prefix Beck--Fiala
framework motivated by scheduling. In the present setting, the prefix constraint is
$\max_{v,i}|d_{\mathcal F_i}(v)-i|\leq \beta,$
but it must be achieved together with the exact rainbow constraints
$|\mathcal C_j\cap(\mathcal F_i\setminus\mathcal F_{i-1})|=1.$
Thus the problem combines two features that are usually treated separately:
rainbow transversal constraints and prefix discrepancy constraints.

\subsection{Rainbow prefix decomposition theorem}

The proof of Theorem~\ref{th:prefixDecomp} uses
Theorem~\ref{th:KomlosByGuo} as the discrepancy oracle.
Using Beck--Fiala~\cite{beck1981integer,lovasz1986discrepancy}
instead gives \(\disc(Q)\le2\|Q\|_1\le4k\) at each split.
The same proof then yields \(\beta\le(8k+4)\log N\),
giving a small explicit constant for the subsequent
\(2\)-multicast application.

\begin{theorem}\label{th:prefixDecomp}
Let $N$ be a power of two.
Let $\mathcal{C}_{1}, \ldots ,\mathcal{C}_{n} \subseteq \binom{[kn]}{ k}$ be multihypergraphs each consisting exactly of $N$ edges.
Let $\mathcal{F} = \bigsqcup_{i \in [n]} \mathcal{C}_{i}$ be an $N$-regular multihypergraph with $n \cdot N$ edges. Then, in time polynomial in \(n, N\) and \(k\), we can find prefix decomposition $\varnothing =\mathcal{F}_0 \subset \mathcal{F}_1 \subset  \dots \subset \mathcal{F}_N=\mathcal{F}$ such that 
\begin{enumerate}
    \item $d_{\mathcal{F}_i}(v) \in [i -\beta, i + \beta]$ for all $v \in [kn]$ and $i \in [N]$ with $\beta \leq O(\sqrt{k}\log N)$,
    \item $|\mathcal{C}_j \cap \mathcal{F}_{i}| = i$ for all $i\in [N], j \in [n]$.
\end{enumerate}

\end{theorem}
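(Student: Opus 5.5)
We build the chain by recursive halving, with Theorem~\ref{th:discHardConstraints} as the splitting step and Theorem~\ref{th:KomlosByGuo} as its discrepancy oracle. Suppose we have a sub-multihypergraph \(\mathcal G\) that contains exactly \(2M\) edges of each color and is \(2M\)-regular on \([kn]\), up to an additive defect. We split it into two halves \(\mathcal G'\sqcup\mathcal G''\), each containing exactly \(M\) edges of every color, such that every vertex has degree in \(\mathcal G'\) within \(O(\sqrt k)\) of half its degree in \(\mathcal G\).

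The split is an instance of Theorem~\ref{th:discHardConstraints}.
\begin{itemize}
\item Variables are indexed by the edges of \(\mathcal G\), and we take \(x=\tfrac12\mathbf 1\) and \(c=0\).
\item The hard matrix \(H\) is the color-incidence matrix. Each column has exactly one \(1\), and \(Hx=(M,\dots,M)\) is integral.
\item The soft matrix \(A\) is the vertex--edge incidence matrix. Each column has exactly \(k\) ones, so every \(Q\in\mathcal Q(A)\) has columns of \(\ell_2\)-norm at most \(\sqrt{2k}\).
\end{itemize}
Theorem~\ref{th:KomlosByGuo} applied to \(Q/\sqrt{2k}\) gives a polynomial-time oracle with \(\beta=O(\sqrt k)\). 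The rounding therefore returns \(y\in\{0,1\}\) with exactly \(M\) edges of each color and
\[
|d_{\mathcal G'}(v)-\tfrac12 d_{\mathcal G}(v)|\le O(\sqrt k)+\tfrac{2}{|\mathcal G|}\cdot 1 = O(\sqrt k).
\]
For \(\beta\) we use the explicit constant from Theorem~\ref{th:KomlosByGuo}. Alternatively, Beck--Fiala gives \(\disc(Q)\le 2\|Q\|_1\le 4k\).

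We recurse along a full binary tree of depth \(\log N\). The root is \(\mathcal F\), and each node at depth \(t\) holds a block with \(N/2^t\) edges of every color. Because \(N\) is a power of two, every split is between equal halves. The leaves, ordered left to right, are blocks \(B_1,\dots,B_N\). Each \(B_i\) contains exactly one edge of every color, and we set \(\mathcal F_i:=B_1\sqcup\cdots\sqcup B_i\). Property~2, \(|\mathcal C_j\cap\mathcal F_i|=i\), then holds automatically.

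For property~1, we define the error of a node \(\mathcal B\) at depth \(t\) as
\[
e(\mathcal B,v):=d_{\mathcal B}(v)-N/2^t .
\]
The root has error \(0\) by \(N\)-regularity. For a child, the error is half the parent's error plus the splitting error, which is \(O(\sqrt k)\). Every prefix \(\mathcal F_i\) is the disjoint union of at most one node from each depth, namely the left siblings along the root-to-leaf path given by the binary expansion of \(i\). Summing the errors of those at most \(\log N\) nodes bounds \(|d_{\mathcal F_i}(v)-i|\).

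The main obstacle is controlling the error of each node. The halving recurrence shows that a node's error is at most a geometric sum, \(O(\sqrt k)\cdot\sum_s 2^{-s}=O(\sqrt k)\). Summing \(O(\sqrt k)\) over the \(\le\log N\) nodes composing \(\mathcal F_i\) gives \(\beta=O(\sqrt k\log N)\), as required. With Beck--Fiala this would be at most \((8k+4)\log N\) once constants are tracked. There are \(N-1\) splits, each polynomial-time, so the total running time is polynomial in \(n,N,k\).
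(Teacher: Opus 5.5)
Your proposal is correct and follows essentially the same route as the paper: the same splitting step via Theorem~\ref{th:discHardConstraints} (with $x=\tfrac12\mathbf 1$, $c=0$, the color-incidence $H$ and the Koml\'os oracle), the same halving recurrence giving per-node error at most $2\rho$, and the same decomposition of each prefix into at most $\log N$ complete subtrees. The only cosmetic difference is that the paper scales $Q$ by the least power of two $\tau$ with $\tau^2\ge 2k$ rather than by $\sqrt{2k}$, which presumably just keeps the oracle input rational; the bound is the same.
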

\begin{proof}
We treat repeated edges as distinct labelled copies. 
We first record the splitting step. Let \(\mathcal G\subseteq \mathcal F\) be such
that, for some even \(s\), $|\mathcal C_j\cap \mathcal G|=s
    $ for all $j\in[n]$.
Let \(E(\mathcal G)\) be the set of edge-copies of \(\mathcal G\). Let $A\in\{0,1\}^{[kn]\times E(\mathcal G)}$
be the vertex-edge incidence matrix, and let $H\in\{0,1\}^{[n]\times E(\mathcal G)}$
be the color-incidence matrix: $H_{j,e}=1  \Longleftrightarrow  e\in\mathcal C_j .$
Each column of \(H\) has exactly one \(1\). Take \(x=\frac12\mathbf 1\). Then $Hx=\frac{s}{2}\mathbf 1\in\mathbb Z^n.$
Every column of \(Q\in\mathcal Q(A)\) has at most \(2k\)
nonzero entries, all in \(\{-1,1\}\), so
\(\|Q\|_2\le\sqrt{2k}\).
Let \(\tau\) be the least power of two with \(\tau^2\ge2k\).
Applying Theorem~\ref{th:KomlosByGuo} to \(Q/\tau\) provides
a discrepancy oracle with bound \(C\tau\), where \(C\) is
the absolute constant in that theorem.
Theorem~\ref{th:discHardConstraints}, applied with \(c=0\), gives
\[
    Hy=Hx,
    \qquad
    \|A(x-y)\|_\infty
    \le C\tau+\frac{2}{|E(\mathcal G)|}
    \le C\tau+2=:\rho.
\]
Since \(\tau<2\sqrt{2k}\), we have \(\rho=O(\sqrt{k})\).
Define $\mathcal G^0=\{e:y_e=1\},
    \mathcal G^1=\{e:y_e=0\}.$
Then $\mathcal G=\mathcal G^0\sqcup \mathcal G^1,$
\[
    |\mathcal C_j\cap\mathcal G^0|
    =
    |\mathcal C_j\cap\mathcal G^1|
    =
    \frac{s}{2}
    \qquad \text{for all }j\in[n],
\]
and, for every vertex \(v\),
\[
    \left|
        d_{\mathcal G^b}(v)-\frac12 d_{\mathcal G}(v)
    \right|
    \le \rho,
    \qquad b\in\{0,1\}.
\]

Now build a complete binary tree of depth $\log N$. Put $\mathcal F^\emptyset=\mathcal F.$
If \(\alpha\in\{0,1\}^t\), \(t< \log N\), has already been constructed, then $|\mathcal C_j\cap\mathcal F^\alpha|=\frac{N}{2^t}
     \text{ for all }j\in[n],$
and this number is even. Apply the splitting step and write $\mathcal F^\alpha
    =
    \mathcal F^{\alpha0}\sqcup \mathcal F^{\alpha1}.$
Thus every leaf contains exactly one edge from each \(\mathcal C_j\).

For a node \(\alpha\), write $D_\alpha:=\frac{N}{2^{|\alpha|}}.$
We claim that
\[
    \left|d_{\mathcal F^\alpha}(v)-D_\alpha\right|\le 2\rho
    \qquad
    \text{for every }v\in[kn].
\]
At the root this is exact, since \(\mathcal F\) is \(N\)-regular. If the claim holds
for \(\alpha\), then for \(b\in\{0,1\}\),
\[
\begin{aligned}
    \left|d_{\mathcal F^{\alpha b}}(v)-D_{\alpha b}\right|
    &\le
    \left|
        d_{\mathcal F^{\alpha b}}(v)
        -\frac12 d_{\mathcal F^\alpha}(v)
    \right|
    +
    \frac12
    \left|d_{\mathcal F^\alpha}(v)-D_\alpha\right|  \\
    &\le \rho+\frac12\cdot 2\rho
    =2\rho .
\end{aligned}
\]

Order the leaves lexicographically: $\mathcal L_1,\ldots,\mathcal L_N.$
Define $ \mathcal F_i:=\mathcal L_1\sqcup\cdots\sqcup \mathcal L_i$ for $
     i=1,\ldots,N.$
Since every leaf contains exactly one edge from every \(\mathcal C_j\), $|\mathcal C_j\cap(\mathcal F_i\setminus \mathcal F_{i-1})|=1
    \text{ for all }j\in[n],\ i\in[N].$
Equivalently, $|\mathcal C_j\cap\mathcal F_i|=i
    \text{ for all }j\in[n],\ i\in[N].$

It remains to bound the degrees of prefixes. The first \(i\) leaves are a disjoint
union of at most \(\log N\) complete subtrees. Let \(\mathcal P_i\) be the
corresponding set of nodes. Then $\mathcal F_i=\bigsqcup_{\alpha\in\mathcal P_i}\mathcal F^\alpha$ and $\sum_{\alpha\in\mathcal P_i}D_\alpha=i$.
Therefore, for every \(v\in[kn]\),
\[
\begin{aligned}
    |d_{\mathcal F_i}(v)-i|
    &=
    \left|
        \sum_{\alpha\in\mathcal P_i}
        \bigl(d_{\mathcal F^\alpha}(v)-D_\alpha\bigr)
    \right| \\
    &\le
    \sum_{\alpha\in\mathcal P_i}
    \left|d_{\mathcal F^\alpha}(v)-D_\alpha\right|  \\
    &\le
    2\rho\log_2 N.
\end{aligned}
\]
Thus we may take
\(\beta=2\rho\log N=O(\sqrt{k}\log N)\).
\end{proof}
\section{2-multicast scheduling via rainbow prefix decomposition}\label{sec:twoMulticast}

The following optimization problem came about from an application in Telecom. The goal is to minimize worst-case buffer use in the following scheduling problem. 
We are given a chip with $n$ in-ports and $2n$ out-ports. At each time moment (round), the  in-ports receive a message (all assumed to be of the same size), each of which is assigned
two out-ports as destinations. We are given the composite communication request (CCR) for
$n$ rounds. 
CCR is an ensemble of $n$ messages for each of $n$ in-ports. Each of
the messages has two destinations. We add a natural condition that each of the $2n$ out-ports
is the destination for exactly $n$ messages.

Our goal is to split the CCR into rounds, each being as described above. If a received
message cannot be sent to all of its destinations immediately, the undelivered copies are
stored in the chip's buffer and sent later in some  queueing order. We seek an algorithm that, for every CCR satisfying the above
balance condition, constructs a feasible schedule while keeping the maximum buffer
occupancy as small as possible.

\subsection{Scheduling algorithm}

Without loss of generality, denote the set of inports \([n]\).
The CCR has size
\[
    (\#\text{ in-ports},\#\text{ out-ports},\#\text{ rounds})
    =
    n\times 2n\times n .
\]

A convenient way to represent  a CCR is as follows.
Consider a multigraph $G$ on $2n$ vertices, one vertex per  out-port. Each
message is an edge in this graph, connecting its two destinations. Each edge has a color, corresponding to the in-port that sends it. There
are $n$ edges of each of the $n$ colors. Moreover, the degree of each
vertex is $n$, and the graph may have parallel edges.  

Consider a CCR and a communication scheme for it (that is, a partition of the incoming
messages into rounds). We describe the communication scheme by a sequence of partial graphs
$G^1,\ldots,G^n$, where $G^i$ consists of all edges that arrive in the first $i$ rounds. 

$G^\ell$ is a graph with $\ell$ edges of each of the $n$
colors, and  average degree $\ell$, and $G^n=G$.

The next goal is
to get a description of the buffer size required for a CCR with a concrete communication scheme. To
this end, let us consider one out-port $v$ and understand, how much buffer space do we need for
messages with $v$ as the destination. We use the notation  $d_v^\ell$ for the degree of $v$ in $G^\ell$ $q_v^\ell$ for {\it excess degree}: 
$q_v^\ell:=d_v^\ell-\ell$.

\begin{lemma}
Let $M_b(v)=\max_{a:0\leq a\leq b}(q_v^b-q_v^a)$.
\begin{enumerate}
    \item The number of messages destined to $v$ that will be stored in the buffer at round $b$
    is $M_b(v)$.
    \item The total number of messages in the buffer at round $b$ is
    $-\sum_v\min_{a:0\leq a\leq b}q_v^a$.
    \item The maximum number of messages in the buffer is
    \[
        -\sum_v\min_{a:0\leq a\leq n}q_v^a.
    \]
\end{enumerate}
\end{lemma}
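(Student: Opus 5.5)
The proof models each out-port $v$ as a queue fed by arrivals and drained at constant rate one per round. The model I will use is the natural one for this setting: in each round, every out-port can deliver exactly one copy. Since $G$ is $n$-regular with $n$ rounds, out-port $v$ has total demand $n$ and serves at most one copy per round. In round $\ell$, the copies destined to $v$ that arrive are the $d_v^\ell-d_v^{\ell-1}$ new edges at $v$. The out-port delivers one copy if any is available, either new or buffered. Copies not delivered on arrival stay in the buffer.

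The key quantity is the backlog $B_v^b$, the number of copies for $v$ that have arrived by the end of round $b$ but are undelivered. Part~1 follows from the Lindley-type recursion
\[
B_v^b=\max\bigl(0,\;B_v^{b-1}+(d_v^b-d_v^{b-1})-1\bigr)=\max\bigl(0,\;B_v^{b-1}+q_v^b-q_v^{b-1}\bigr),\qquad B_v^0=0 .
\]
The increment is arrivals minus the one service slot, truncated at $0$ when the port idles. Unrolling this recursion by induction on $b$ gives the standard reflected-walk identity $B_v^b=\max_{0\le a\le b}(q_v^b-q_v^a)=M_b(v)$. The inductive step is the identity $\max(0,M_{b-1}(v)+q_v^b-q_v^{b-1})=\max\bigl(q_v^b-q_v^b,\ \max_{a\le b-1}(q_v^b-q_v^a)\bigr)$, which is immediate since the $a=b$ term is $0$.

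For part~2, I sum part~1 over $v$:
\[
\sum_v M_b(v)=\sum_v q_v^b-\sum_v\min_{a\le b}q_v^a .
\]
Every round adds exactly $n$ edges, so $\sum_v d_v^b=2nb$, while there are $2n$ vertices. Hence $\sum_v q_v^b=2nb-2n\cdot b=0$, and the total is $-\sum_v\min_{a\le b}q_v^a$.

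For part~3, the quantity in part~2 is nondecreasing in $b$, because the minimum is taken over a growing range. The maximum over rounds is therefore attained at $b=n$, giving $-\sum_v\min_{0\le a\le n}q_v^a$. Note that $q_v^n=0=q_v^0$, consistent with an empty buffer at the end.

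The main obstacle is fixing the service model precisely, since the statement does not spell it out. Specifically, I need to justify that each out-port sends at most one copy per round and that buffered copies are served whenever the port is otherwise idle, so that the backlog really follows the recursion above. Once that model is stated, the rest is the routine reflected-random-walk computation.
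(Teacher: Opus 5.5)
Your proof is correct and follows the paper's argument. Parts 2 and 3 are the same: sum over $v$ using $\sum_v q_v^b=0$, then use monotonicity of the running minimum. For part 1 you unroll the Lindley recursion by induction, whereas the paper argues directly, counting service capacity over $(a,b]$ for the lower bound and taking the last moment with an empty $v$-queue for the upper bound; both yield the same reflected-walk identity, and your version has the advantage of stating the service model explicitly.

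One correction to your closing aside: $q_v^n=q_v^0=0$ does not mean the buffer is empty at the end. By your own part 1, the backlog after round $n$ is $M_n(v)=-\min_{a}q_v^a\ge 0$. This is exactly the per-port maximum in part 3, and it is positive whenever some $q_v^a<0$, i.e., whenever port $v$ idles in some round.
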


\begin{proof}
\begin{enumerate}
    \item Let us show that the size of the buffer is at least $M_b$. Clearly, the number of
    messages received for $v$ between moments $a$ and $b$ is $d_v^b-d_v^a$. At the same time,
    we have $b-a$ moments to send them out to $v$, which shows that at moment $b$ the buffer
    will contain at least $q_v^b-q_v^a$ messages for $v$. Next, assume that $M'$ is the
    number of messages stored for $v$ in the buffer at moment $b$. Let $a$ be the last moment
    before $b$ at which the buffer had no messages for $v$. It is easy to see that then
    $q_v^b-q_v^a=M'$.

    \item Using the first part, the number of messages in the buffer at round $b$ is
    \[
    \sum_v M_b(v)
      = \sum_v q_v^b + \sum_v \max_{a:0\leq a\leq b}(-q_v^a)
      = \sum_v \max_{a:0\leq a\leq b}(-q_v^a)
      = -\sum_v \min_{a:0\leq a\leq b}q_v^a.
    \]
    In the second equality we used the fact that the sum of excess degrees is $0$ at any stage.

    \item It is easy to see that the last displayed expression is maximized when $b=n$.
\end{enumerate}
\end{proof}

Now, we can reformulate this scheduling problem  as the following prefix
decomposition problem. \begin{quote} Let $G=\mathcal C_1\sqcup\cdots\sqcup \mathcal C_n$
be an edge-colored multigraph on vertex set \([2n]\), where the vertices are the
out-ports and the color classes correspond to the active in-ports. Each color
class \(\mathcal C_j\) consists of exactly \(n\) edges, and \(G\) is \(n\)-regular: $d_G(v)=n
    \text{ for every }v\in[2n].$
We want to find a chain
$\varnothing=G^0\subset G^1\subset\cdots\subset G^n=G$
such that $|\mathcal C_j\cap G^\ell|=\ell
    \text{ for every }j,\ell\in[n].$\end{quote}
The quantity $q_v^\ell:=d_{G^\ell}(v)-\ell$
measures the prefix excess at output \(v\). For a given chain \(G^\bullet\), the
required total buffer size is $B(G^\bullet)
    =
    -\sum_{v\in[2n]}\min_{0\leq \ell\leq n} q_v^\ell .$
Hence the original scheduling objective is
\[
    \min_{G^\bullet}
    B(G^\bullet)
    =
    \min_{G^\bullet}
    \left(
        -\sum_{v\in[2n]}\min_{0\leq \ell\leq n}
        \bigl(d_{G^\ell}(v)-\ell\bigr)
    \right),
\]
where the minimum is over all chains satisfying the color constraints.

In fact, we can prove a stronger
prefix-discrepancy guarantee $|q_v^\ell|
    =
    |d_{G^\ell}(v)-\ell|
    \leq \beta
    \text{ for every }v\in[2n]\text{ and }\ell\in[n].$
Since there are \(2n\) out-ports, the buffer identity gives $B(G^\bullet)
    \leq
    2n\beta .$
Thus any such prefix decomposition gives a schedule with maximum
buffer occupancy at most \(2n\beta\).
Theorem~\ref{th:prefixDecomp}, applied with \(k=2\), gives
the bound \(O(n\log n)\).
Using the Beck--Fiala oracle instead gives
\(\beta\le20\log n\), yielding the following explicit bound.

\begin{theorem}
Let \(n\) be a power of two. For every balanced \(n\times 2n\times n\) CCR,
there is a polynomial-time scheduling algorithm whose maximum buffer occupancy is
at most $40n\log n .$
\end{theorem}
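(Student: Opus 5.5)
The plan is to combine the scheduling reformulation with the Beck--Fiala variant of Theorem~\ref{th:prefixDecomp}. The CCR is encoded as the edge-colored multigraph $G=\mathcal C_1\sqcup\cdots\sqcup\mathcal C_n$ on vertex set $[2n]$. Each color class has exactly $n$ edges, $G$ is $n$-regular, and every edge is a $2$-element set. Hence $G$ is a $2$-uniform multihypergraph on $[kn]$ with $k=2$, and $N=n$ is a power of two. So Theorem~\ref{th:prefixDecomp} applies with $k=2$ and $N=n$. It yields, in polynomial time, a chain $\varnothing=G^0\subset\cdots\subset G^n=G$ with $|\mathcal C_j\cap G^\ell|=\ell$ for all $j,\ell$ and $|d_{G^\ell}(v)-\ell|\le\beta$.

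The schedule reads off this chain directly. In round $\ell$, in-port $j$ sends the unique edge of $\mathcal C_j\cap(G^\ell\setminus G^{\ell-1})$, so each in-port sends exactly one message per round. Undelivered copies are queued and sent one per round per out-port as in the Lemma. By part~3 of the Lemma, the maximum buffer occupancy equals $-\sum_{v}\min_{0\le a\le n}q_v^a$. Since $q_v^0=0$ and $q_v^a\ge-\beta$, each term is at most $\beta$, so the occupancy is at most $2n\beta$.

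The main point is obtaining the explicit constant $\beta\le 20\log n$, and for this I would rerun the proof of Theorem~\ref{th:prefixDecomp} with the Beck--Fiala oracle in place of Theorem~\ref{th:KomlosByGuo}. Every $Q\in\mathcal Q(A)$ has columns with at most $2k=4$ nonzero entries in $\{-1,1\}$, so $\|Q\|_1\le4$. Beck--Fiala then gives a constructive coloring with $\disc(Q)\le 2\|Q\|_1\le 8$. Theorem~\ref{th:discHardConstraints} with $c=0$ and $x=\tfrac12\mathbf 1$ gives a split error $\rho\le 8+2/|E(\mathcal G)|\le 10$, using $2/|E|\le 2$. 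The tree argument then gives node deviation $2\rho$ and prefix deviation $2\rho\log n\le 20\log n$, matching $(8k+4)\log N$ at $k=2$. Therefore the buffer is at most $2n\cdot20\log n=40n\log n$.

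The step needing care is checking that the Beck--Fiala bound $2\|Q\|_1$ really holds for general real matrices and is constructive. Iterated rounding in the style of Beck--Fiala and Lov\'asz--Spencer--Vesztergombi gives this for columns of $\ell_1$-norm at most $t$ with error below $2t$, and it runs in polynomial time. It also makes the oracle calls efficient, so the whole algorithm is polynomial in $n$.
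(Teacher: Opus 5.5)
Your proposal is correct and is essentially the paper's argument. You encode the CCR as a $2$-uniform, $n$-regular colored multigraph and rerun the proof of Theorem~\ref{th:prefixDecomp} with the Beck--Fiala oracle in place of the Koml\'os oracle. This gives $\disc(Q)\le 2\|Q\|_1\le 4k=8$, split error $\rho\le 10$, and prefix error $\beta\le 2\rho\log n=(8k+4)\log n=20\log n$; the buffer identity then bounds the occupancy by $2n\beta\le 40n\log n$. Your check that the iterated-rounding Beck--Fiala bound holds constructively for real matrices with column $\ell_1$-norm at most $t$ is what the paper uses implicitly when it cites Beck--Fiala for these difference matrices.
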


\section{Conclusions and future directions}\label{sec:Conclusions}
Several natural questions remain open.

\begin{itemize}
    \item \textbf{Two-block \(0\)-\(1\) matrices.}
    Our lower bound for the mountain-distance theorem comes from the
three-permutations construction, where the support of each column may split into
three intervals of consecutive ones in the row order. We conjecture that this is
necessary for logarithmic discrepancy: if \(A\) is a \(0\)-\(1\) matrix such
that, after some ordering of the rows, the ones in every column form at most two
consecutive blocks, then
\[
    \disc(A)=O(1).
\]
\item \textbf{Weighted one-block matrices.}
Let \(A\in\{0,1\}^{m\times n}\), and define
\[
    \herdisc(A):=\max_{J\subseteq[n]}\disc(A_{\ast J}),
    \qquad
    \wdisc(A):=\sup_{d\in[0,1]^n}\disc(A\operatorname{diag}(d)).
\]
Reis~\cite{reis2023vector} conjectures that
\[
    \wdisc(A)\le O(1)\cdot \herdisc(A)
\]
for every \(0\)-\(1\) matrix \(A\). Our weighted one-block question is a special
case: if the ones in every column of \(A\) form one consecutive block, then
\(\herdisc(A)\le 1\). Thus the conjecture would imply that every weighted
one-block matrix
\[
    A_{ij}=d_j\mathbf 1[i\in I_j],
    \qquad d_j\in[0,1],
\]
has constant hereditary discrepancy. Such matrices arise naturally in
resource-allocation rounding problems. We ask whether this special case admits
a direct proof.
    \item \textbf{Better rainbow prefix decompositions.}
Using Theorem~\ref{th:KomlosByGuo}, our framework gives
prefix error \(O(\sqrt{k}\log N)\) constructively.
Banaszczyk's bound on the Steinitz constant~\cite{banaszczyk2012series}
also yields the non-constructive bound \(O(k\sqrt{\log N})\). A natural open
problem is to remove the dependence on \(N\) constructively. In particular, can
one find, in polynomial time, a rainbow prefix decomposition with prefix error
\(O(k)\), or even \(O(\sqrt{k})\)? Alternatively, one could seek lower bounds
showing that such improvements are impossible.
\end{itemize}

\bibliographystyle{ACM-Reference-Format}
\bibliography{bibliography}

@article{ercan2026bound,
  title={A $(\log n)^{1/4}$ bound for the {Koml{\'o}s} problem},
  author={Ercan, Eren},
  journal={arXiv preprint arXiv:2609.08885},
  year={2026}
}

@article{karingula2026elementary,
  title={An elementary proof of the {Koml{\'o}s} conjecture},
  author={Karingula, Sankeerth Rao and Lovett, Shachar},
  journal={arXiv preprint arXiv:2609.20979},
  year={2026}
}

@article{guo2026vector,
  title={Vector balancing in polynomial time},
  author={Guo, Shengtao and Fang, Ethan X. and Lu, Junwei},
  journal={arXiv preprint arXiv:2609.23540},
  year={2026}
}

@inproceedings{jiang2022tighter,
  title={A Tighter Relation Between Hereditary Discrepancy and Determinant Lower Bound},
  author={Jiang, Haotian and Reis, Victor},
  booktitle={Symposium on Simplicity in Algorithms (SOSA)},
  pages={308--313},
  year={2022},
  organization={SIAM}
}

@book{reis2023vector,
  title={Vector Balancing and Integer Programming},
  author={Reis, Victor},
  year={2023},
  publisher={University of Washington}
}

@inproceedings{rothvoss2012entropy,
  title={The entropy rounding method in approximation algorithms},
  author={Rothvo{\ss}, Thomas},
  booktitle={Proceedings of the Twenty-Third Annual ACM-SIAM Symposium on Discrete Algorithms},
  pages={356--372},
  year={2012},
  organization={SIAM}
}

@inproceedings{bansal2019generalization,
  title={On a generalization of iterated and randomized rounding},
  author={Bansal, Nikhil},
  booktitle={Proceedings of the 51st Annual ACM SIGACT Symposium on Theory of Computing},
  pages={1125--1135},
  year={2019}
}

@article{gluskin1989extremal,
  title={Extremal properties of orthogonal parallelepipeds and their applications to the geometry of Banach spaces},
  author={Gluskin, Efim Davydovich},
  journal={Mathematics of the USSR-Sbornik},
  volume={64},
  number={1},
  pages={85--96},
  year={1989}
}

@article{olson1978balancing,
  title={Balancing families of sets},
  author={Olson, John E and Spencer, Joel H},
  journal={Journal of Combinatorial Theory, Series A},
  volume={25},
  number={1},
  pages={29--37},
  year={1978},
  publisher={Elsevier}
}

@article{baranyai1974factrization,
  title={On the factrization of the complete uniform hypergraphs},
  author={Baranyai, Zsolt},
  journal={Infinite and finite sets},
  year={1974}
}

@article{beck1981roth,
  title={Roth's estimate of the discrepancy of integer sequences is nearly sharp},
  author={Beck, J{\'o}zsef},
  journal={Combinatorica},
  volume={1},
  number={4},
  pages={319--325},
  year={1981},
  publisher={Springer}
}

@article{spencer2001discrepancy,
  title={The discrepancy of permutation families},
  author={Spencer, Joel H and Srinivasan, Aravind and Tetali, Prasad},
  journal={Unpublished manuscript},
  pages={21},
  year={2001}
}

@article{bansal2019algorithm,
  title={An algorithm for Koml{\'o}s conjecture matching Banaszczyk's bound},
  author={Bansal, Nikhil and Dadush, Daniel and Garg, Shashwat},
  journal={SIAM Journal on Computing},
  volume={48},
  number={2},
  pages={534--553},
  year={2019},
  publisher={SIAM}
}

@article{rothvoss2017constructive,
  title={Constructive discrepancy minimization for convex sets},
  author={Rothvoss, Thomas},
  journal={SIAM Journal on Computing},
  volume={46},
  number={1},
  pages={224--234},
  year={2017},
  publisher={SIAM}
}

@inproceedings{bar2005using,
  title={Using fractional primal-dual to schedule split intervals with demands},
  author={Bar-Yehuda, Reuven and Rawitz, Dror},
  booktitle={European Symposium on Algorithms},
  pages={714--725},
  year={2005},
  organization={Springer}
}

@incollection{bar2005local,
  title={The local ratio technique and its application to scheduling and resource allocation problems},
  author={Bar-Yehuda, Reuven and Bendel, Keren and Freund, Ari and Rawitz, Dror},
  booktitle={Graph Theory, Combinatorics and Algorithms: Interdisciplinary Applications},
  pages={107--143},
  year={2005},
  publisher={Springer}
}

@article{bar2001unified,
  title={A unified approach to approximating resource allocation and scheduling},
  author={Bar-Noy, Amotz and Bar-Yehuda, Reuven and Freund, Ari and Naor, Joseph and Schieber, Baruch},
  journal={Journal of the ACM (JACM)},
  volume={48},
  number={5},
  pages={1069--1090},
  year={2001},
  publisher={ACM New York, NY, USA}
}

@inproceedings{rohwedder2025cost,
  title={Cost Preserving Dependent Rounding for Allocation Problems},
  author={Rohwedder, Lars and Rouhani, Arman and Wennmann, Leo},
  booktitle={52nd EATCS International Colloquium on Automata, Languages, and Programming, ICALP 2025},
  pages={127},
  year={2025},
  organization={Schloss Dagstuhl-Leibniz-Zentrum fur Informatik GmbH, Dagstuhl Publishing}
}

@article{beck1981integer,
  title={``Integer-making'' theorems},
  author={Beck, J{\'o}zsef and Fiala, Tibor},
  journal={Discrete Applied Mathematics},
  volume={3},
  number={1},
  pages={1--8},
  year={1981},
  publisher={Elsevier}
}

@article{spencer1985six,
  title={Six standard deviations suffice},
  author={Spencer, Joel},
  journal={Transactions of the American mathematical society},
  volume={289},
  number={2},
  pages={679--706},
  year={1985}
}

@article{spencer1986balancing,
  title={Balancing vectors in the max norm},
  author={Spencer, Joel},
  journal={Combinatorica},
  volume={6},
  number={1},
  pages={55--65},
  year={1986},
  publisher={Springer}
}

@article{banaszczyk1998balancing,
  title={Balancing vectors and Gaussian measures of n-dimensional convex bodies},
  author={Banaszczyk, Wojciech},
  journal={Random Structures \& Algorithms},
  volume={12},
  number={4},
  pages={351--360},
  year={1998},
  publisher={Wiley Online Library}
}

@article{banaszczyk2012series,
  title={On series of signed vectors and their rearrangements},
  author={Banaszczyk, Wojciech},
  journal={Random Structures \& Algorithms},
  volume={40},
  number={3},
  pages={301--316},
  year={2012},
  publisher={Wiley Online Library}
}

@inproceedings{bansal2010constructive,
  title={Constructive algorithms for discrepancy minimization},
  author={Bansal, Nikhil},
  booktitle={2010 IEEE 51st Annual Symposium on Foundations of Computer Science},
  pages={3--10},
  year={2010},
  organization={IEEE}
}

@article{rothvoss2016better,
  title={Better bin packing approximations via discrepancy theory},
  author={Rothvoss, Thomas},
  journal={SIAM Journal on Computing},
  volume={45},
  number={3},
  pages={930--946},
  year={2016},
  publisher={SIAM}
}

@incollection{bansal2017approximation,
  title={Approximation-friendly discrepancy rounding},
  author={Bansal, Nikhil and Nagarajan, Viswanath},
  booktitle={A Journey Through Discrete Mathematics: A Tribute to Ji{\v{r}}{\'\i} Matou{\v{s}}ek},
  pages={89--114},
  year={2017},
  publisher={Springer}
}

@inproceedings{bansal2022flow,
  title={Flow time scheduling and prefix Beck-Fiala},
  author={Bansal, Nikhil and Rohwedder, Lars and Svensson, Ola},
  booktitle={Proceedings of the 54th Annual ACM SIGACT Symposium on Theory of Computing},
  pages={331--342},
  year={2022}
}

@inproceedings{bansal2026decoupling,
  title={Decoupling via affine spectral-independence: Beck-Fiala and Koml{\'o}s bounds beyond Banaszczyk},
  author={Bansal, Nikhil and Jiang, Haotian},
  booktitle={Proceedings of the 58th Annual ACM Symposium on Theory of Computing},
  pages={432--442},
  year={2026}
}

@article{baranyai1979edge,
  title={The edge-coloring of complete hypergraphs I},
  author={Baranyai, Zsolt},
  journal={Journal of Combinatorial Theory, Series B},
  volume={26},
  number={3},
  pages={276--294},
  year={1979},
  publisher={Elsevier}
}

@article{el2007existence,
  title={On the existence of a rainbow 1-factor in 1-factorizations of K},
  author={El-Zanati, SI and Plantholt, MJ and Sissokho, PA and Spence, LE},
  journal={Journal of Combinatorial Designs},
  volume={15},
  number={6},
  pages={487--490},
  year={2007},
  publisher={Wiley Online Library}
}

@article{li2007existence,
  title={On the existence of a rainbow 1-factor in proper coloring of K\_ $\{$rn$\}$\^{}$\{$(r)$\}$},
  author={Li, Xueliang and Xu, Zhixia},
  journal={arXiv preprint arXiv:0711.2847},
  year={2007}
}

@article{aharoni2009rainbow,
  title={Rainbow matchings in $ r $-partite $ r $-graphs},
  author={Aharoni, Ron and Berger, Eli},
  journal={The Electronic Journal of Combinatorics},
  pages={R119--R119},
  year={2009}
}

@article{gao2021full,
  title={Full rainbow matchings in graphs and hypergraphs},
  author={Gao, Pu and Ramadurai, Reshma and Wanless, Ian M and Wormald, Nick},
  journal={Combinatorics, Probability and Computing},
  volume={30},
  number={5},
  pages={762--780},
  year={2021},
  publisher={Cambridge University Press}
}

@inproceedings{bansal2022prefix,
  title={Prefix Discrepancy, Smoothed Analysis, and Combinatorial Vector Balancing},
  author={Bansal, Nikhil and Jiang, Haotian and Meka, Raghu and Singla, Sahil and Sinha, Makrand},
  booktitle={13th Innovations in Theoretical Computer Science Conference, ITCS 2022},
  pages={13},
  year={2022},
  organization={Schloss Dagstuhl-Leibniz-Zentrum fur Informatik GmbH, Dagstuhl Publishing}
}

@inproceedings{zuzic2023sparse,
  title={Sparse semi-oblivious routing: Few random paths suffice},
  author={Zuzic, Goran and Haeupler, Bernhard and Roeyskoe, Antti},
  booktitle={Proceedings of the 2023 ACM Symposium on Principles of Distributed Computing},
  pages={222--232},
  year={2023}
}

@article{hu2001survey,
  title={A survey on multi-net global routing for integrated circuits},
  author={Hu, Jiang and Sapatnekar, Sachin S},
  journal={Integration},
  volume={31},
  number={1},
  pages={1--49},
  year={2001},
  publisher={Elsevier}
}

@book{kahng2011vlsi,
  title={VLSI physical design: from graph partitioning to timing closure},
  author={Kahng, Andrew B and Lienig, Jens and Markov, Igor L and Hu, Jin},
  volume={312},
  year={2011},
  publisher={Springer}
}

@article{karp1987global,
  title={Global wire routing in two-dimensional arrays},
  author={Karp, Richard M and Leighton, Frank Thomson and Rivest, Ronald L and Thompson, Clark D and Vazirani, Umesh V and Vazirani, Vijay V},
  journal={Algorithmica},
  volume={2},
  number={1},
  pages={113--129},
  year={1987},
  publisher={Springer}
}

@inproceedings{bhatia1992improved,
  title={Improved algorithms for routing on two-dimensional grids},
  author={Bhatia, Dinesh and Leighton, Tom and Makedon, Fillia and Norton, Carolyn H},
  booktitle={International Workshop on Graph-Theoretic Concepts in Computer Science},
  pages={114--122},
  year={1992},
  organization={Springer}
}

@inproceedings{hajiaghayi2006semi,
  title={Semi-oblivious routing},
  author={Hajiaghayi, MohammadTaghi and Kleinberg, Robert and Leighton, Tom},
  booktitle={Proceedings of the eighteenth annual ACM symposium on Parallelism in algorithms and architectures},
  pages={234--234},
  year={2006}
}

@article{raghavan1987randomized,
  title={Randomized rounding: a technique for provably good algorithms and algorithmic proofs},
  author={Raghavan, Prabhakar and Tompson, Clark D},
  journal={Combinatorica},
  volume={7},
  number={4},
  pages={365--374},
  year={1987},
  publisher={Springer}
}

@article{skutella2002approximating,
  title={Approximating the single source unsplittable min-cost flow problem},
  author={Skutella, Martin},
  journal={Mathematical Programming},
  volume={91},
  number={3},
  pages={493--514},
  year={2002},
  publisher={Springer}
}

@inproceedings{traub2024single,
  title={Single-source unsplittable flows in planar graphs},
  author={Traub, Vera and Koch, Laura Vargas and Zenklusen, Rico},
  booktitle={Proceedings of the 2024 Annual ACM-SIAM Symposium on Discrete Algorithms (SODA)},
  pages={639--668},
  year={2024},
  organization={SIAM}
}

@inproceedings{majthoub2025integer,
  title={Integer and unsplittable multiflows in series-parallel digraphs},
  author={Majthoub Almoghrabi, Mohammed and Skutella, Martin and Warode, Philipp},
  booktitle={International Conference on Integer Programming and Combinatorial Optimization},
  pages={427--441},
  year={2025},
  organization={Springer}
}

@article{schrijver1999ring,
  title={The ring loading problem},
  author={Schrijver, Alexander and Seymour, Paul and Winkler, Peter},
  journal={SIAM review},
  volume={41},
  number={4},
  pages={777--791},
  year={1999},
  publisher={SIAM}
}

@article{skutella2016note,
  title={A note on the ring loading problem},
  author={Skutella, Martin},
  journal={SIAM Journal on Discrete Mathematics},
  volume={30},
  number={1},
  pages={327--342},
  year={2016},
  publisher={SIAM}
}

@article{daeubel2022improved,
  title={An improved upper bound for the ring loading problem},
author={D{\"a}ubel, Karl},
  journal={SIAM Journal on Discrete Mathematics},
  volume={36},
  number={2},
  pages={867--887},
  year={2022},
  publisher={SIAM}
}

@inproceedings{aleman2025unsplittable,
  title={Unsplittable multicommodity flows in outerplanar graphs},
  author={Alem{\'a}n-Espinosa, David and Kumar, Nikhil},
  booktitle={International Conference on Integer Programming and Combinatorial Optimization},
  pages={385--399},
  year={2025},
  organization={Springer}
}

@inproceedings{aleman2026pinning,
  title={Pinning on Tight Cuts: Improved Algorithm and Bounds for Unsplittable Multicommodity Flows in Outerplanar Graphs},
  author={Alem{\'a}n Espinosa, David and Schlomberg, Niklas},
  booktitle={53rd International Colloquium on Automata, Languages, and Programming (ICALP 2026)},
  pages={11--1},
  year={2026},
  organization={Schloss Dagstuhl--Leibniz-Zentrum f{\"u}r Informatik}
}

@inproceedings{hoberg2017logarithmic,
  title={A logarithmic additive integrality gap for bin packing},
  author={Hoberg, Rebecca and Rothvoss, Thomas},
  booktitle={Proceedings of the Twenty-Eighth Annual ACM-SIAM Symposium on Discrete Algorithms},
  pages={2616--2625},
  year={2017},
  organization={SIAM}
}

@article{lovett2015constructive,
  title={Constructive discrepancy minimization by walking on the edges},
  author={Lovett, Shachar and Meka, Raghu},
  journal={SIAM Journal on Computing},
  volume={44},
  number={5},
  pages={1573--1582},
  year={2015},
  publisher={SIAM}
}

@inproceedings{doerr2005roundings,
  title={Roundings respecting hard constraints},
  author={Doerr, Benjamin},
  booktitle={Annual Symposium on Theoretical Aspects of Computer Science},
  pages={617--628},
  year={2005},
  organization={Springer}
}

@inproceedings{grigoriev2005unrelated,
  title={Unrelated parallel machine scheduling with resource dependent processing times},
  author={Grigoriev, Alexander and Sviridenko, Maxim and Uetz, Marc},
  booktitle={International Conference on Integer Programming and Combinatorial Optimization},
  pages={182--195},
  year={2005},
  organization={Springer}
}

@article{fiala1983algorithm,
  title={An algorithm for the open-shop problem},
  author={Fiala, Tibor},
  journal={Mathematics of Operations Research},
  volume={8},
  number={1},
  pages={100--109},
  year={1983},
  publisher={INFORMS}
}

@inproceedings{bansal2014better,
  title={Better algorithms and hardness for broadcast scheduling via a discrepancy approach},
  author={Bansal, Nikhil and Charikar, Moses and Krishnaswamy, Ravishankar and Li, Shi},
  booktitle={Proceedings of the twenty-fifth annual ACM-SIAM symposium on Discrete algorithms},
  pages={55--71},
  year={2014},
  organization={SIAM}
}

@inproceedings{fan2024multi,
  title={The multi-commodity flow problem: Double dantzig-wolfe decomposition},
  author={Fan, Zhang and Jiazheng, Wang and Lacroix, Mathieu and Calvo, Roberto Wolfler and Magnouche, Youcef and Martin, Sebastien},
  booktitle={2024 10th International Conference on Control, Decision and Information Technologies (CoDIT)},
  pages={1171--1176},
  year={2024},
  organization={IEEE}
}

@article{chakrabarti2007approximation,
  title={Approximation algorithms for the unsplittable flow problem},
  author={Chakrabarti, Amit and Chekuri, Chandra and Gupta, Anupam and Kumar, Amit},
  journal={Algorithmica},
  volume={47},
  number={1},
  pages={53--78},
  year={2007},
  publisher={Springer}
}

@inproceedings{harris2013moser,
  title={The Moser-Tardos framework with partial resampling},
  author={Harris, David G and Srinivasan, Aravind},
  booktitle={2013 IEEE 54th annual symposium on foundations of computer science},
  pages={469--478},
  year={2013},
  organization={IEEE}
}

@inproceedings{newman2012beck,
  title={Beck's three permutations conjecture: A counterexample and some consequences},
  author={Newman, Alantha and Neiman, Ofer and Nikolov, Aleksandar},
  booktitle={2012 IEEE 53rd Annual Symposium on Foundations of Computer Science},
  pages={253--262},
  year={2012},
  organization={IEEE}
}

@article{kumar2009unified,
  title={A unified approach to scheduling on unrelated parallel machines},
  author={Kumar, VS Anil and Marathe, Madhav V and Parthasarathy, Srinivasan and Srinivasan, Aravind},
  journal={Journal of the ACM (JACM)},
  volume={56},
  number={5},
  pages={1--31},
  year={2009},
  publisher={ACM New York, NY, USA}
}

@article{lovasz1986discrepancy,
  title={Discrepancy of set-systems and matrices},
  author={Lov{\'a}sz, L{\'a}szl{\'o} and Spencer, Joel and Vesztergombi, Katalin},
  journal={European Journal of Combinatorics},
  volume={7},
  number={2},
  pages={151--160},
  year={1986},
  publisher={Elsevier}
}

@article{biedl2002balanced,
  title={Balanced k-colorings},
  author={Biedl, Therese C and {\v{C}}enek, Eowyn and Chan, Timothy M and Demaine, Erik D and Demaine, Martin L and Fleischer, Rudolf and Wang, Ming-Wei},
  journal={Discrete Mathematics},
  volume={254},
  number={1-3},
  pages={19--32},
  year={2002},
  publisher={Elsevier}
}

@article{gandhi2006dependent,
  title={Dependent rounding and its applications to approximation algorithms},
  author={Gandhi, Rajiv and Khuller, Samir and Parthasarathy, Srinivasan and Srinivasan, Aravind},
  journal={Journal of the ACM (JACM)},
  volume={53},
  number={3},
  pages={324--360},
  year={2006},
  publisher={ACM New York, NY, USA}
}

@inproceedings{leighton1998multicommodity,
  title={Multicommodity flow and circuit switching},
  author={Leighton, Tom and Rao, Satish and Srinivasan, Aravind},
  booktitle={Proceedings of the Thirty-First Hawaii International Conference on System Sciences},
  volume={7},
  pages={459--465},
  year={1998},
  organization={IEEE}
}

@article{eisenbrand2013bin,
  title={Bin packing via discrepancy of permutations},
  author={Eisenbrand, Friedrich and P{\'a}lv{\"o}lgyi, D{\"o}m{\"o}t{\"o}r and Rothvo{\ss}, Thomas},
  journal={ACM Transactions on Algorithms (TALG)},
  volume={9},
  number={3},
  pages={1--15},
  year={2013},
  publisher={ACM New York, NY, USA}
}

@inproceedings{keevash2018hypergraph,
  title={Hypergraph matchings and designs},
  author={Keevash, P},
  booktitle={Proceedings of the International Congress of Mathematicians (ICM 2018)},
  volume={3},
  year={2019},
  organization={World Scientific Publishing}
}

@article{glock2021extremal,
  title={Extremal aspects of graph and hypergraph decomposition problems},
  author={Glock, Stefan and K{\"u}hn, Daniela and Osthus, Deryk},
  journal={arXiv preprint arXiv:2008.00926},
  year={2020}
}

@article{rodl1985packing,
  title={On a packing and covering problem},
  author={R{\"o}dl, Vojt{\v{e}}ch},
  journal={European Journal of Combinatorics},
  volume={6},
  number={1},
  pages={69--78},
  year={1985},
  publisher={Elsevier}
}

@article{pippenger1989asymptotic,
  title={Asymptotic behavior of the chromatic index for hypergraphs},
  author={Pippenger, Nicholas and Spencer, Joel},
  journal={Journal of combinatorial theory, Series A},
  volume={51},
  number={1},
  pages={24--42},
  year={1989},
  publisher={Elsevier}
}

@article{keevash2016existence,
  title={The existence of designs},
  author={Keevash, Peter},
  journal={arXiv preprint arXiv:1401.3665},
  year={2014}
}

@book{glock2023existence,
  title={The existence of designs via iterative absorption: hypergraph F-designs for arbitrary F},
  author={Glock, Stefan and K{\"u}hn, Daniela and Lo, Allan and Osthus, Deryk},
  volume={284},
  number={1406},
  year={2023},
  publisher={American Mathematical Society}
}

@article{keevash2018rainbow,
  title={Rainbow matchings in properly colored multigraphs},
  author={Keevash, Peter and Yepremyan, Liana},
  journal={SIAM journal on Discrete Mathematics},
  volume={32},
  number={3},
  pages={1577--1584},
  year={2018},
  publisher={SIAM}
}

@article{coulson2020rainbow,
title = {Rainbow factors in hypergraphs},
journal = {Journal of Combinatorial Theory, Series A},
volume = {172},
pages = {105184},
year = {2020},
issn = {0097-3165},
doi = {https://doi.org/10.1016/j.jcta.2019.105184},
author = {Matthew Coulson and Peter Keevash and Guillem Perarnau and Liana Yepremyan},
}

\appendix

\section{Proof of Corollary~\ref{cor:discHardConstraints}}
\label{app:discHardConstraints}

For \(B\in\mathbb R^{p\times q}\), its \emph{determinant lower bound} is
\[
    \detlb(B):=
    \max_{\substack{I\subseteq[p],\,J\subseteq[q]\\|I|=|J|\ge1}}
    |\det(B_{I,J})|^{1/|I|},
\]
with \(\detlb(B):=0\) if \(B\) has no columns.
We use the following algorithm as our discrepancy oracle.

\begin{theorem}[Jiang--Reis~\cite{jiang2022tighter}]
\label{th:JiangReisOracle}
There is a randomized polynomial-time algorithm that, given
\(B\in\mathbb R^{p\times q}\), finds, with high probability,
a coloring \(\chi\in\{-1,1\}^q\) satisfying
\[
    \|B\chi\|_\infty
    \le
    O\!\left(\sqrt{\log(2+p)\log(2+q)}\right)\detlb(B).
\]
\end{theorem}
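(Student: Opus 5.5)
The plan is to route the argument through the factorization norm $\gamma_2$. Recall $\gamma_2(B)=\min\{\max_i\|u_i\|_2\cdot\max_j\|v_j\|_2 : B=UV\}$, where $u_i$ are the rows of $U$ and $v_j$ the columns of $V$. The proof has two independent inequalities, which I would establish in this order:
\[
    \text{(i)}\ \ \min_{\chi}\|B\chi\|_\infty\le O\bigl(\sqrt{\log(2+p)}\bigr)\,\gamma_2(B)\ \text{(constructively)},
    \qquad
    \text{(ii)}\ \ \gamma_2(B)\le O\bigl(\sqrt{\log(2+q)}\bigr)\,\detlb(B).
\]
Multiplying (i) and (ii) gives the stated bound. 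The algorithm itself is simple: compute a near-optimal factorization $B=UV$ by solving the standard SDP for $\gamma_2$ to accuracy $1+o(1)$, and feed it to (i).

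For (i), I would take an optimal factorization and rescale so that $\|v_j\|_2\le1$ for all $j$ and $\|u_i\|_2\le\gamma_2(B)$ for all $i$. Then I would run the constructive Banaszczyk-type walk of Bansal--Dadush--Garg on the vectors $v_1,\dots,v_q$. This produces a random $\chi\in\{-1,1\}^q$ such that $\sum_j\chi_jv_j$ is $O(1)$-subgaussian. Each coordinate of $B\chi$ is $\langle u_i,\sum_j\chi_jv_j\rangle$, which is therefore $O(\gamma_2(B))$-subgaussian. A union bound over the $p$ rows gives $\|B\chi\|_\infty=O(\sqrt{\log(2+p)})\gamma_2(B)$ with high probability.

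For (ii), I would use SDP duality for $\gamma_2$:
\[
    \gamma_2(B)=\max\bigl\{\|D_1^{1/2}BD_2^{1/2}\|_{S_1}\bigr\},
\]
the maximum being over diagonal $D_1,D_2\succeq0$ of trace $1$, with $\|\cdot\|_{S_1}$ the nuclear norm. Fix an optimal pair and set $M=D_1^{1/2}BD_2^{1/2}$ with singular values $\sigma_1\ge\sigma_2\ge\cdots$.

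The key estimate is that for each $k$ the product $\prod_{i\le k}\sigma_i$ is controlled by $k\times k$ minors. By Cauchy--Binet, $\prod_{i\le k}\sigma_i^2$ is at most the sum of $\det(M_{I,J})^2$ over all $|I|=|J|=k$, and each such minor equals $\det(B_{I,J})\prod_{i\in I}\sqrt{d_{1,i}}\prod_{j\in J}\sqrt{d_{2,j}}$. Summing over $I,J$ against the trace-one weights, via the elementary symmetric function bound $e_k(d)\le 1/k!$, I expect
\[
    \prod_{i\le k}\sigma_i\le \detlb(B)^k\cdot\frac{1}{k!}\lesssim\Bigl(\frac{e\,\detlb(B)}{k}\Bigr)^k.
\]
Consequently $\sigma_k\lesssim e\,\detlb(B)/k$. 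Summing over $k\le\min(p,q)$ gives a harmonic sum, so $\|M\|_{S_1}=O(\log(2+q))\,\detlb(B)$.

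This naive bound loses a full $\log$, not $\sqrt{\log}$, and that is the main obstacle. To recover the square root I would split the sum into $k\le K$ and $k>K$. For the small-$k$ range I would use Cauchy--Schwarz, $\sum_{k\le K}\sigma_k\le\sqrt{K}\,\|M\|_F$, together with $\|M\|_F^2=\sum_{i,j}d_{1,i}d_{2,j}B_{ij}^2\le\detlb(B)^2$, where the last step uses $1\times1$ minors. For the tail I would use $\sigma_k\lesssim\detlb(B)/k$, so $\sum_{k>K}\sigma_k$ is bounded by a log of the ratio $q/K$. The remaining task is to balance these two pieces, possibly after first reducing $q$ by restricting $D_2$ to its heaviest coordinates, and I would aim to get $O(\sqrt{\log(2+q)})\detlb(B)$ from that balance; this is the step I expect to need the most care. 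I would then check that all constants are absolute and that the SDP precision loss is only a $1+o(1)$ factor.
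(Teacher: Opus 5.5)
The paper does not prove this statement; it quotes it from Jiang--Reis. Taken on its own terms, your plan has a genuine gap. Step (i) and your Cauchy--Binet estimate $\prod_{i\le k}\sigma_i\le\detlb(B)^k/k!$ are fine. Inequality (ii), however, is false, so no balancing of head against tail can rescue it.

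Take $B=T_q$, the $q\times q$ lower-triangular all-ones matrix. Every row is an interval of columns, so $T_q$ is totally unimodular and $\detlb(T_q)=1$. Plugging the uniform weights $D_1=D_2=I/q$ into your dual formula gives $\gamma_2(T_q)\ge\|T_q\|_{S_1}/q$. The same bound also follows directly from $\|UV\|_{S_1}\le\|U\|_F\|V\|_F$. The inverse $T_q^{-1}$ is bidiagonal, with $1$ on the diagonal and $-1$ below it. Hence the singular values of $T_q$ are the inverse square roots of the eigenvalues $4\sin^2\frac{(2k-1)\pi}{2(2q+1)}$ of the tridiagonal matrix $(T_q^{-1})^{\top}T_q^{-1}$. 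Using $\sin x\le x$, they are at least $\frac{2q+1}{(2k-1)\pi}$ for $k\in[q]$, and therefore $\gamma_2(T_q)\ge\frac1\pi\ln(q+1)$. This is the classical $\Theta(\log q)$ norm of triangular truncation.

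So the decay $\sigma_k\approx\detlb(B)/k$ that you derived is attained for every $k$, and the harmonic sum is sharp. The ratio $\gamma_2(B)/\detlb(B)$ can be $\Theta(\log q)$. Consequently the chain (i)+(ii) can give at best $O(\sqrt{\log(2+p)}\,\log(2+q))\,\detlb(B)$, a factor $\sqrt{\log(2+q)}$ short of the target. Note also that for $T_q$ it is step (i) that is lossy, since $\disc(T_q)\le1$. The loss therefore has to be distributed differently, not just estimated more carefully.

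The missing idea is to use the determinant bound locally, across $O(\log q)$ partial-coloring phases whose costs add in squares rather than linearly.

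1. Consider a phase with alive column set $J$, and any probability vector $P$ on the rows. Let $\lambda_k$ be the $k$-th largest eigenvalue of $B_{\ast J}^{\top}PB_{\ast J}$. Cauchy--Binet applied to its principal minors gives $\lambda_k\le e^2|J|\detlb(B)^2/k^2$. Hence its $\frac34|J|$ smallest eigenvalues sum to $O(\detlb(B)^2)$, independently of $|J|$.
2. A minimax over $\{X:0\preceq X\preceq I,\ \mathrm{tr}\,X=\frac34|J|\}$, which is an SDP, then yields a subspace $W\subseteq\mathbb R^J$ with $\dim W\ge|J|/2$. Every row projects onto $W$ with norm $O(\detlb(B))$.
3. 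A Lovett--Meka type walk confined to $W$ fixes a constant fraction of $J$. Meanwhile each row's increment is $O(\detlb(B))$-subgaussian conditionally on the past.
4. Over the $O(\log q)$ phases, each row's final error is then $O(\sqrt{\log(2+q)})\,\detlb(B)$-subgaussian. A union bound over the $p$ rows gives the stated $O(\sqrt{\log(2+p)\log(2+q)})\,\detlb(B)$.

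Your step (i) and your minor computation survive in this scheme, but they must be applied phase by phase at cost $O(\detlb(B))$ each, not once through a single global factorization.
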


\begin{proof}[Proof of Corollary~\ref{cor:discHardConstraints}]
Let \(D:=\herdisc(A)\) and fix \(Q\in\mathcal Q(A)\).
Every column of \(Q\) is the difference of two columns of \(A\),
and these pairs are disjoint. By multilinearity, the determinant
of any \(s\times s\) submatrix \(M\) of \(Q\) is a signed sum
of \(2^s\) determinants of \(s\times s\) submatrices of \(A\). Thus
\[
    |\det M|\le 2^s\detlb(A)^s,
    \qquad
    \detlb(Q)\le 2\detlb(A)\le 4D,
\]
where the last inequality is the determinant lower bound of
Lov\'asz, Spencer, and Vesztergombi~\cite{lovasz1986discrepancy}.

Since \(Q\) has \(m\) rows and at most \(m\) columns,
Theorem~\ref{th:JiangReisOracle} provides a discrepancy oracle with
\[
    \beta=O\!\left(\log(2+m)D\right).
\]
Applying Theorem~\ref{th:discHardConstraints} gives
\(y\in\{0,1\}^n\) such that
\[
    Hy=Hx,
    \qquad
    \langle c,y-x\rangle\le0,
    \qquad
    \|A(x-y)\|_\infty
    \le \beta+\frac{2}{n}\|A\|_\infty
    =O\!\left(\log(2+m)D\right),
\]
where we used \(\|A\|_\infty\le D\), which follows by considering
single-column restrictions of \(A\).
\end{proof}

\section{Rounding with block-TU hard constraints}
\label{app:block-TU-hard-constraints}

Throughout this section, let
\[
    H=\operatorname{diag}(H^1,\ldots,H^s)\in\mathbb Z^{k\times n},
\]
where each block is totally unimodular and has at most \(r\) rows.
Recall that a matrix is \emph{totally unimodular} if every square
minor is \(0\) or \(\pm1\).

A \emph{circuit vector} of \(H\) is a nonzero integer vector in
\(\ker(H)\) with inclusion-minimal support and relatively prime
entries. By total unimodularity, its nonzero entries are \(\pm1\).
Moreover, each circuit is contained in one block, so its support
has size at most \(r+1\).
For \(A\in\mathbb R^{m\times n}\), define
\[
    \mathcal Q_H(A):=
    \left\{
        [Ag^1,\ldots,Ag^p]:
        \begin{array}{l}
            0\le p\le m,\quad g^1,\ldots,g^p
            \text{ are circuit vectors of }H,\\
            \text{with pairwise disjoint supports}
        \end{array}
    \right\}.
\]
When every block is an all-ones row, these circuits are precisely
the pairwise exchanges used in
Theorem~\ref{th:discHardConstraints}.

\begin{lemma}
\label{lem:TU-parity-decomposition}
If \(S\subseteq[n]\) satisfies
\(H_{\ast S}\mathbf1\in(2\mathbb Z)^k\), then \(S\) can be
partitioned into supports of circuit vectors of \(H\).
Such a partition can be found in polynomial time.
\end{lemma}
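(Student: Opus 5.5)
We argue by induction on \(|S|\), repeatedly peeling off the support of one circuit vector of \(H\) contained in \(S\). The two facts we need are: (i) if \(S\neq\emptyset\) satisfies the parity condition, then \(S\) contains the support of some circuit vector; (ii) removing such a support from \(S\) preserves the parity condition. The base case \(S=\emptyset\) is trivial.

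For (ii), let \(g\) be a circuit vector with support \(T\subseteq S\). By total unimodularity its nonzero entries are \(\pm1\), so \(g\equiv \mathbf 1_T \pmod 2\) coordinatewise. Since \(Hg=0\), we get
\[
    H_{\ast T}\mathbf 1 = H\mathbf 1_T \equiv Hg = 0 \pmod 2 .
\]
Hence \(H_{\ast (S\setminus T)}\mathbf 1 = H_{\ast S}\mathbf 1 - H_{\ast T}\mathbf 1\in(2\mathbb Z)^k\), and the induction hypothesis applies to \(S\setminus T\).

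For (i), it suffices to show that the columns of \(H_{\ast S}\) are linearly dependent over \(\mathbb R\). Indeed, any nonzero kernel vector supported in \(S\) contains a support-minimal kernel vector. Such a vector can be scaled to a primitive integer vector, which is a circuit vector of \(H\). A zero column \(j\) is covered by this argument, since \(e_j\) is then itself a circuit vector.

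Suppose, for contradiction, that the columns of \(H_{\ast S}\) are linearly independent and \(S\neq\emptyset\). Put \(b:=\tfrac12 H_{\ast S}\mathbf 1\), which lies in \(\mathbb Z^k\) by the parity assumption. Consider the polytope
\[
    P:=\{z\in[0,1]^S : H_{\ast S}z=b\}.
\]
By linear independence, \(P\) is the single point \(\tfrac12\mathbf 1\). On the other hand, \(H_{\ast S}\) is totally unimodular, being a column submatrix of \(H\), and \(b\) is integral. Therefore \(P\) is an integral polytope by the Hoffman--Kruskal theorem, since adding the bound constraints \(0\le z\le1\) preserves total unimodularity. Its unique vertex \(\tfrac12\mathbf 1\) is not integral, a contradiction.

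For the algorithmic part, we find a circuit inside \(S\) in polynomial time by Gaussian elimination. Start from a linearly dependent column set, which exists by (i), and greedily delete columns while the remaining set stays dependent. This yields a minimal dependent set \(T\), whose kernel is one-dimensional. Scaling the spanning kernel vector to a primitive integer vector gives a circuit vector with support \(T\), with entries \(\pm1\) by total unimodularity. Each iteration strictly decreases \(|S|\), so there are at most \(n\) iterations.

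The only nontrivial step is (i), the existence of a circuit; it is where total unimodularity is used, through the integrality of \(P\).
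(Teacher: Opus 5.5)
Your proof is correct, and its structure matches the paper's: peel off one circuit support at a time, observe that a circuit's support has even column sum because its entries are \(\pm1\), and find each circuit by Gaussian elimination.

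The difference is in the key existence step (i).

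\emph{The paper's route.} The parity hypothesis says the columns of \(H_{\ast S}\) sum to zero over \(\mathbb F_2\), so they are \(\mathbb F_2\)-dependent. For a totally unimodular matrix, real and \(\mathbb F_2\) independence coincide, since every nonzero minor is \(\pm1\) and hence odd. So the columns are dependent over \(\mathbb R\), and a circuit exists inside \(S\).

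\emph{Your route.} You get real dependence by contradiction via Hoffman--Kruskal. If the columns were independent, then
\[
\{z\in[0,1]^S : H_{\ast S}z=\tfrac12 H_{\ast S}\mathbf 1\}
\]
would be the single non-integral point \(\tfrac12\mathbf 1\). This contradicts integrality of that polytope, which has a TU constraint matrix and integral right-hand side.

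\emph{Comparison.} The paper's argument is shorter and needs only the minor-based definition of total unimodularity. Yours calls on a heavier polyhedral theorem, but it shows clearly why half-integral parity data is incompatible with independence. You also spell out details the paper compresses into ``Gaussian elimination'': extracting a minimal dependent set greedily, the one-dimensional kernel, scaling to a primitive vector, and the zero-column case.
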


\begin{proof}
For a totally unimodular matrix, column independence over
\(\mathbb R\) and over \(\mathbb F_2\) is equivalent, since every
nonzero minor is \(\pm1\).
Thus, if \(S\ne\varnothing\), its even column sum implies that it
contains the support \(C\) of a circuit vector \(g\).
Since \(g_j=\pm1\) on \(C\) and \(Hg=0\), the column sum over
\(C\) is also even. Remove \(C\) and repeat.
Each circuit can be found by Gaussian elimination.
\end{proof}

\begin{theorem}
\label{th:block-TU-hard-constraints}
Let \(A\in\mathbb R^{m\times n}\), let \(H\) be as above, and let
\(c\in\mathbb R^n\). Suppose that an oracle returns, for every
\(Q\in\mathcal Q_H(A)\), a coloring \(\chi\) satisfying
\(\|Q\chi\|_\infty\le\beta\), where \(\beta\ge0\).
Then, for every \(x\in[0,1]^n\) with \(Hx\in\mathbb Z^k\),
one can find \(y\in\{0,1\}^n\) such that
\[
    Hy=Hx,
    \qquad
    \langle c,y-x\rangle\le0,
    \qquad
    \|A(x-y)\|_\infty
    \le \beta+\frac{1}{n}\|A\|_\infty.
\]
The algorithm uses at most \(\lceil2\log_2 n\rceil\) oracle calls
and otherwise runs in polynomial time.
\end{theorem}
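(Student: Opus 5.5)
We follow the proof of Theorem~\ref{th:discHardConstraints} step by step, replacing pairwise exchanges by circuit vectors of \(H\). The pairing step becomes Lemma~\ref{lem:TU-parity-decomposition}. The improved error term \(\frac1n\|A\|_\infty\) will come from a careful truncation that keeps the hard constraints exact using total unimodularity.

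\textbf{Sparsification.} First replace \(x\) by an optimal vertex \(x'\) of
\[
\{z\in[0,1]^n: Az=Ax,\ Hz=Hx\}
\]
with respect to \(\langle c,z\rangle\). Let \(F\) be its fractional support. We need \(|F|\) small, so that every matching-type matrix has at most \(m\) columns. Restricted to \(F\), the columns of \(\binom{A}{H}\) are independent. Hence \(|F|\le m+\operatorname{rank}(H_{\ast F})\).

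Within each block \(H^b\), consider the fractional columns \(F_b\). Suppose \(H^b_{\ast F_b}\) had full column rank. By total unimodularity, the system \(H^b_{\ast F_b}z=\text{integer}\) then has a unique, integral solution, which contradicts fractionality. So every block contributes a circuit supported in \(F_b\).

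Applying this iteratively gives \(\operatorname{rank}(H_{\ast F})\le |F|-(\text{number of disjoint circuits})\). Combined with the counting, this should give that any family of disjoint circuits inside \(F\) has size at most \(m\). This is exactly what is needed for \(Q\in\mathcal Q_H(A)\). I expect this counting to be the main obstacle: the analogue of \(n'\ge 2k'\) now needs a rank argument rather than disjoint row supports. My fallback is to note that the number of disjoint circuits \(p\) satisfies \(|F|-\operatorname{rank}(H_{\ast F})\ge p\), since disjoint circuits are linearly independent kernel vectors. Then \(p\le |F|-\operatorname{rank}\le m\).

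\textbf{Truncation.} Set \(L=\lceil 2\log_2 n\rceil\) and truncate \(x'\) to \(x''\) with \(L\) binary digits. Now \(H(x'-x'')=:\delta\) has entries in \(2^{-L}\mathbb Z\), and we must repair it. Consider the polytope \(\{u: Hu=Hx',\ x''\le u\le x''+2^{-L}\mathbf 1\}\) scaled by \(2^L\). It is defined by a TU matrix with integral data and contains \(x'\). Hence it has an integral vertex, which can be chosen cost-minimal; its cost is at most \(\langle c,x'\rangle\). Call it \(w\). Since we move from \(x'\) to a point within the same box, \(\|w-x'\|_1\le |F|2^{-L}\le \frac1n\). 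This gives the error \(\frac1n\|A\|_\infty\), improving the factor \(2\) from the two separate steps in Theorem~\ref{th:discHardConstraints}.

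\textbf{Bit elimination.} For the top level \(r\), let \(S\) be the set of coordinates with bit \(1\) at level \(r\). Since \(Hw\in\mathbb Z^k\) and the lower bits are zero, \(H_{\ast S}\mathbf 1\in(2\mathbb Z)^k\). Lemma~\ref{lem:TU-parity-decomposition} then partitions \(S\) into circuit supports \(g^1,\dots,g^p\), with \(p\le m\) by the sparsification bound.

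Call the oracle on \(Q=[Ag^1,\dots,Ag^p]\) to obtain \(\chi\), flipping its sign if necessary so that \(\sum_t\chi_t\langle c,g^t\rangle\le0\). Then update
\[
w\leftarrow w+2^{-r}\sum_t\chi_t g^t.
\]
Because each \(g^t\) has entries \(\pm1\) exactly on \(S\), every coordinate in \(S\) changes by \(\pm2^{-r}\). This clears bit \(r\) and stays within \([0,1]\). The update preserves \(Hw\), does not increase cost, and adds error at most \(2^{-r}\beta\).

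Summing over \(r=L,\dots,1\) gives error at most \(\beta\), using at most \(L\) oracle calls. Adding the truncation error completes the bound.
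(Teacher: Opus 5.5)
Your plan is the paper's proof. You take an optimal vertex $x'$ with $\dim\ker(H_{\ast F})=|F|-\operatorname{rank}(H_{\ast F})\le m$; your ``fallback'' is exactly the paper's count, and the block-by-block full-rank argument can be dropped. You then do one TU rounding onto $2^{-L}\mathbb Z^n$ for the $\frac1n\|A\|_\infty$ term, and eliminate bits level by level via Lemma~\ref{lem:TU-parity-decomposition} with a sign-flipped oracle coloring. The step that fails as written is the truncation polytope: the box $x''\le u\le x''+2^{-L}\mathbf 1$ does not pin coordinates that are already integral. For example, take $H=(1,1,1)$, $A=(0,1,0)$, $c=(-1,0,0)$. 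Then $x'=(1,\tfrac13,\tfrac23)$ is the optimal vertex, $F=\{2,3\}$ and $L=4$. Your scaled box is $v_1\in[16,17]$, $v_2\in[5,6]$, $v_3\in[10,11]$ with $v_1+v_2+v_3=32$. Its cost-minimal point is $(17,5,10)$, so $w_1=\tfrac{17}{16}>1$. The bit elimination, with $\chi$ forced by the cost sign, moves $w_1$ through $\tfrac98,\tfrac54,\tfrac32$ and outputs $y=(2,0,0)\notin\{0,1\}^3$. Your claim that each update stays in $[0,1]$ needs every odd numerator at level $r$ to lie strictly between $0$ and $2^r$, which presupposes $w\in[0,1]^n$.

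Intersecting with $[0,1]^n$ does not fully repair this. A coordinate with $x'_j=0$ may still rise to $2^{-L}$; for instance, two such coordinates forming a TU block $(1,-1)$ can both move to $2^{-L}$ without changing $Hu$, if this lowers the cost. Then $S\not\subseteq F$, and the circuits are no longer kernel vectors of $H_{\ast F}$. So the bound $p\le m$, and with it $Q\in\mathcal Q_H(A)$, is lost. Your estimate $\|w-x'\|_1\le|F|2^{-L}$ also silently assumes such coordinates do not move. The paper avoids all of this by rounding in $\{v: Hv=2^LHx,\ \lfloor 2^Lx'\rfloor\le v\le\lceil 2^Lx'\rceil\}$. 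This polytope is integral by total unimodularity and contains $2^Lx'$. It forces $v_j=2^Lx'_j$ for every $j\notin F$ and keeps $0\le v_j\le 2^L$ on $F$. With that box, $w\in[0,1]^n$, $w=x'$ off $F$, every $S$ lies in $F$, and the rest of your argument goes through unchanged.
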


\begin{proof}
Let \(x'\) be a vertex minimizing \(\langle c,z\rangle\) over $\{z\in[0,1]^n:Az=Ax,\ Hz=Hx\},$
and let \(F:=\{j:0<x'_j<1\}\).
The columns of \(\binom{A}{H}\) indexed by \(F\) are independent,
hence
\[
    \dim\ker(H_{\ast F})
    =|F|-\operatorname{rank}(H_{\ast F})\le m.
\]

Set \(L:=\lceil2\log_2 n\rceil\) and \(t:=2^L\).
The polytope $\{v:Hv=tHx,\ \lfloor tx'\rfloor\le v\le\lceil tx'\rceil\}$
contains \(tx'\) and is integral by total unimodularity.
Choose an integral point \(v\) minimizing \(\langle c,v\rangle\)
and set \(w:=v/t\). Then
\[
    Hw=Hx,\qquad
    \langle c,w-x'\rangle\le0,\qquad
    \|w-x'\|_1\le n/t\le1/n.
\]
Also, \(w_j=x'_j\) outside \(F\).

Starting with \(u=w\), process the levels
\(\ell=L,\ldots,1\). At level \(\ell\), let
$S:=\{j\in F:2^\ell u_j\text{ is odd}\}.$
Since \(Hu=Hx\in\mathbb Z^k\),
\[
    H_{\ast S}\mathbf1
    \equiv 2^\ell Hu
    \equiv0\pmod2.
\]
By Lemma~\ref{lem:TU-parity-decomposition}, circuit vectors
\(g^1,\ldots,g^p\) have supports partitioning \(S\).
Their restrictions to \(F\) are linearly independent and belong
to \(\ker(H_{\ast F})\), so \(p\le m\).

Apply the oracle to \(Q=[Ag^1,\ldots,Ag^p]\in\mathcal Q_H(A)\).
Reverse the returned coloring \(\chi\), if necessary, so that
\(\sum_a\chi_a\langle c,g^a\rangle\le0\), and update
\[
    u\leftarrow u+2^{-\ell}\sum_{a=1}^p\chi_ag^a.
\]
Each odd numerator changes by \(\pm1\), so the updated vector
belongs to \([0,1]^n\cap2^{-(\ell-1)}\mathbb Z^n\).
The update preserves \(Hu\), does not increase the cost, and
changes \(Au\) by at most \(2^{-\ell}\beta\) in the infinity norm.

The final vector \(y\) is integral, satisfies \(Hy=Hx\), and
has cost at most that of \(x\). Since \(Ax'=Ax\),
\[
    \|A(x-y)\|_\infty
    \le
    \frac1n\|A\|_\infty+\sum_{\ell=1}^L2^{-\ell}\beta
    \le
    \beta+\frac1n\|A\|_\infty.
\]
There is at most one oracle call per level.
\end{proof}

\begin{corollary}
\label{cor:block-TU-herdisc}
Let \(A\in\mathbb R^{m\times n}\), let \(H\) be as above, and
let \(c\in\mathbb R^n\).
For every \(x\in[0,1]^n\) with \(Hx\in\mathbb Z^k\),
a randomized polynomial-time algorithm finds, with high probability,
\(y\in\{0,1\}^n\) such that
\[
    Hy=Hx,
    \qquad
    \langle c,y-x\rangle\le0,
    \qquad
    \|A(x-y)\|_\infty
    \le O\!\left((r+1)\log(2+m)\herdisc(A)\right).
\]
\end{corollary}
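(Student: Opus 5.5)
The plan mirrors the proof of Corollary~\ref{cor:discHardConstraints}, with Theorem~\ref{th:block-TU-hard-constraints} in place of Theorem~\ref{th:discHardConstraints} and the Jiang--Reis algorithm (Theorem~\ref{th:JiangReisOracle}) as the oracle. Set \(D:=\herdisc(A)\). It suffices to show that every \(Q\in\mathcal Q_H(A)\) satisfies \(\detlb(Q)\le O((r+1)D)\). Since such a \(Q\) has \(m\) rows and at most \(m\) columns, Jiang--Reis then returns, with high probability, a coloring with \(\|Q\chi\|_\infty\le O(\log(2+m))(r+1)D\). We feed this as \(\beta\) into Theorem~\ref{th:block-TU-hard-constraints}. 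The additive term \(\frac1n\|A\|_\infty\le D\) is absorbed, since \(\|A\|_\infty\le\disc\) of a single column, which is at most \(D\). The number of oracle calls is \(\lceil 2\log_2 n\rceil\), so a union bound over the failure probabilities keeps the success probability high. Each call can also be verified and repeated on failure, since \(\|Q\chi\|_\infty\) is checkable.

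The key step is the determinant bound. Each column of \(Q\) has the form \(Ag^a=\sum_{j\in\operatorname{supp} g^a} g^a_j A_{\ast j}\), where the coefficients are \(\pm1\) and there are at most \(r+1\) terms, because each circuit lies in a single block of \(H\) with at most \(r\) rows. The supports of the \(g^a\) are pairwise disjoint. Take an \(s\times s\) submatrix \(M\) of \(Q\). Expanding each column by multilinearity writes \(\det M\) as a signed sum of at most \((r+1)^s\) determinants of \(s\times s\) submatrices of \(A\). These submatrices use distinct columns of \(A\) because the supports are disjoint, although for the bound this does not even matter, since a matrix with a repeated column has determinant zero. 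Each such determinant is at most \(\detlb(A)^s\) in absolute value. Hence \(|\det M|^{1/s}\le (r+1)\detlb(A)\), so \(\detlb(Q)\le (r+1)\detlb(A)\le 2(r+1)D\), using the Lov\'asz--Spencer--Vesztergombi inequality \(\detlb(A)\le 2\herdisc(A)\).

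The circuit support size is the only potential obstacle. The claim that a circuit of a block-diagonal matrix lives in one block, and has support at most \(\operatorname{rank}+1\le r+1\), is already stated in the paper's setup. I would briefly justify it: a minimal dependent set of columns of \(H\) is supported on the columns of one block, because disjoint row supports separate the blocks, and a minimal dependent set has size at most rank plus one. Everything else is routine bookkeeping from the earlier corollary.
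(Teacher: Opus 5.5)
Your proposal is correct and follows the paper's proof step for step. Multilinearity over the pairwise disjoint circuit supports, each of size at most \(r+1\), gives \(\detlb(Q)\le(r+1)\detlb(A)\le2(r+1)\herdisc(A)\). The Jiang--Reis oracle, applied to a matrix with \(m\) rows and at most \(m\) columns, then yields \(\beta=O\bigl((r+1)\log(2+m)\herdisc(A)\bigr)\), and Theorem~\ref{th:block-TU-hard-constraints} finishes the argument once \(\frac1n\|A\|_\infty\le\herdisc(A)\) is absorbed. Your extra remarks (the union bound over the \(\lceil 2\log_2 n\rceil\) oracle calls, and the justification that each circuit lies in one block with support at most \(r+1\)) make explicit details the paper leaves implicit.
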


\begin{proof}
Fix \(Q\in\mathcal Q_H(A)\).
By multilinearity and disjointness of the circuit supports,
each \(d\times d\) minor of \(Q\) is a signed sum of at most
\((r+1)^d\) minors of \(A\). Therefore
\[
    \detlb(Q)\le(r+1)\detlb(A)
    \le2(r+1)\herdisc(A).
\]
Since \(Q\) has at most \(m\) columns, the Jiang--Reis oracle
used in the proof of Corollary~\ref{cor:discHardConstraints}
gives
\[
    \beta=O\!\left((r+1)\log(2+m)\herdisc(A)\right).
\]
Apply Theorem~\ref{th:block-TU-hard-constraints} and use
\(\|A\|_\infty\le\herdisc(A)\).
\end{proof}

\end{document}